\newtheorem{theorem}{Theorem}
\newtheorem*{proof}{Proof}
\newtheorem{lemma}{Lemma}
\DeclareMathOperator*{\argmax}{argmax}
\def\BibTeX{{\rm B\kern-.05em{\sc i\kern-.025em b}\kern-.08em
    T\kern-.1667em\lower.7ex\hbox{E}\kern-.125emX}}
\begin{document}

\title{MLANE: Meta-Learning Based Adaptive Network Embedding\\
}

\author{\IEEEauthorblockN{Chen Cui}
\IEEEauthorblockA{\textit{School of Computer Science} \\
\textit{Sichuan University}\\
Chengdu, China \\
cuichen@stu.scu.edu.cn}
\and
\IEEEauthorblockN{Ning Yang\thanks{\IEEEauthorrefmark{1} Ning Yang is the corresponding author.}\IEEEauthorrefmark{1} }
\IEEEauthorblockA{\textit{School of Computer Science} \\
\textit{Sichuan University}\\
Chengdu, China \\
yangning@scu.edu.cn}
\and
\IEEEauthorblockN{Philip S. Yu}
\IEEEauthorblockA{\textit{Department of Computer Science} \\
\textit{University of Illinois at Chicago}\\
Chicago, USA \\
psyu@uic.edu}
}

\maketitle

\begin{abstract}
Most existing random walk based network embedding methods often follow only one of two principles, homophily or structural equivalence. 
In real world networks, however, nodes exhibit a mixture of homophily and structural equivalence, which requires \textit{adaptive network embedding} that can adaptively preserve both homophily and structural equivalence for different nodes in different down-stream analysis tasks. In this paper, we propose a novel method called Meta-Learning based Adaptive Network Embedding (MLANE), which can learn adaptive sampling strategy for different nodes in different tasks by incorporating sampling strategy learning with embedding learning into one optimization problem that can be solved via an end-to-end meta-learning framework. In extensive experiments on real datasets, MLANE shows significant performance improvements over the baselines. The source code of MLANE and the datasets used in experiments and all the hyperparameter settings for baselines are available at https://github.com/7733com/MLANE.
\end{abstract}

\begin{IEEEkeywords}
network embedding, meta-learning, sampling strategy learning
\end{IEEEkeywords}

\section{Introduction}
\label{Sec:Intro}
Network embedding, which aims at learning low-dimensional vectorial feature representations for nodes in a network and preserving structural properties of nodes, has been attracting increasing interest from the research community in recent years due to its promising performance in various network analysis tasks \cite{cui2018survey}. The existing random walk based network embedding methods often follow only one of two principles, homophily or structural equivalence \cite{mcpherson2001birds,hoff2002latent}. According to homophily principle, embeddings of the nodes that are interconnected within the same community (e.g., purple nodes in Figure \ref{Fig:Example}) should be closer than the embeddings of the nodes that belong to different communities \cite{Zhu8329541}. In contrast, according to structural equivalence principle the nodes with the same structural role will have closer embeddings, even though they are far away from each other (e.g., two red nodes in Figure \ref{Fig:Example}) \cite{grover2016node2vec}.


In real world networks, however, nodes usually exhibit a mixture of homophily and structural equivalence, which requires an \textit{adaptive network embedding} framework that can adaptively preserve both homophily and structural equivalence for different nodes in different down-stream analysis tasks \cite{grover2016node2vec}. The problem of adaptive network embedding is not easy due to the following three challenges. 


  \begin{figure}[t]
    \centering
    \includegraphics[width=0.7\linewidth]{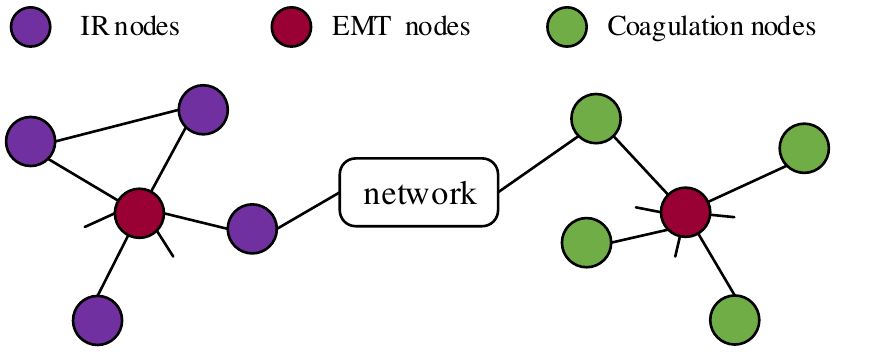}
    \caption{Illustration of adaptive embedding.}
    \label{Fig:Example}
  \end{figure}

\begin {itemize}

\item \textbf{Node-Adaptive Sampling} In real world networks, homophily and structural equivalence likely make different contribution to the embeddings of different nodes, which suggests that the sampling process should pay more attention to homophily for some nodes, while to structural equivalence for other nodes. For example, Figure \ref{Fig:Example} shows a piece of Protein-Protein Interaction (PPI) network \cite{stark2010biogrid}, where a node represents a Homo Sapiens gene, and an edge represents an physical interaction between two genes. In Figure \ref{Fig:Example}, there are three types of nodes, where the red nodes are the genes related to Epithelial-Mesenchymal Transition (EMT), the purple nodes are the genes related to Inflammatory Response (IR), and the green nodes are the genes related to Coagulation. Intuitively, we can see that although the EMT (red) nodes are far away from each other, they are of the same type due to their similar local topology. Furthermore, the IR (purple) nodes and Coagulation (green) nodes are of different types as they are close to different EMT nodes. Therefore, in order to correctly classify these nodes based on their embeddings that preserve their structural properties, the random walks for EMT nodes should choose sampling strategy BFS (Breadth First Search) with larger probability, which is in favor of exploring local neighborhoods, so that the embeddings of EMT nodes can be biased to preserving their structural equivalence which is ascertained by their neighborhoods. On the contrary, the random walks for IR nodes and Coagulation nodes should prefer to sampling strategy DFS (Depth First Search), which is in favor of detecting the nodes that are connected to a same EMT node, so that their embeddings can be biased to preserving their homophily.


\item \textbf{Task-Adaptive Sampling} In real world, homophily and structural equivalence may have different importance for different tasks. For example, for link prediction, preserving homophily is more important than preserving structural equivalence as links often exist between nodes within the same community, while for tasks like structure role identity \cite{Rossi6880836,ribeiro2017struc2vec}, structural equivalence should contribute more to the embeddings as nodes with similar local topology often play similar role. Therefore, it is desirable that the sampling process can automatically and adaptively assign different weights to homophily and structural equivalence for different network analysis tasks.

\item \textbf{End-to-End Trainability} The existing network embedding methods based on random walk treat the sampling process as a data preprocessing before embedding learning. Separating the sampling process from embedding learning, however, likely leads to suboptimal solutions of the embeddings and the sequent network analysis tasks, as the optimization objective of a separate sampling process may be potentially inconsistent, even conflicting, with that of embedding learning. To avoid this issue, we need to incorporate the sampling process with network embedding so that the sampling strategy and the parameters of the embedding model can be learned together in an end-to-end manner.

\end {itemize}

In this paper, to overcome the above challenges, we propose a novel method called Meta-Learning based Adaptive Network Embedding (MLANE for short). The main idea of MLANE is to make the sampling process learnable in a meta-learning framework so that one node can have its own sampling strategy for its embedding learning to discriminately preserve its homophily and structural equivalence for different tasks. For this purpose, MLANE formulates random walk with a reinforcement learning process, by which the sampling process is parametrized and can be trained to let node and task decide the bias to the two search strategies via BFS and DFS. For the sampling strategy learning, we propose a meta-learning based policy learning algorithm, by which MLANE can incorporate the sampling strategy learning with the embedding learning into one optimization problem that can be solved with an end-to-end optimizing algorithm based on gradient ascent. The contributions of this paper can be summarized as follows:
\begin{itemize}
  \item We propose a novel method called Meta-Learning based Adaptive Network Embedding (MLANE), which can adaptively preserve homophily and structural equivalence for node embeddings by making the random walk based sampling process learnable with a meta-learning framework. To our best knowledge, this is the first time to apply meta-learning to network embedding.
  
  \item We propose a meta-learner for sampling strategy learning, which formulates node sampling as a reinforcement learning process so that the sampling strategy can be learned for different nodes in different tasks with an end-to-end optimizing algorithm.
  
  \item Extensive experiments conducted on real world networks across different domains demonstrate the effectiveness of MLANE. Specifically, the results show that the embeddings learned by MLANE can significantly improve the performance of node classification and link prediction.
\end{itemize}

The rest of this paper is organized as follows. In Section \ref{Sec:Prelim}, we introduce the preliminaries and formally define the target problem of this paper. We present the details of MLANE in Section \ref{Sec:ProposedModel}. In Section \ref{Sec:Experiments}, we empirically evaluate MLANE on node classification, link prediction and node clustering over real world datasets, and verify the adaptiveness of MLANE with case study. At last, we briefly review the related works in Section \ref{Sec:RelatedWork} and conclude in Section \ref{Sec:Conclusion}.

\section{Preliminaries And Problem Definition}
\label{Sec:Prelim}
\subsection{BFS vs DFS}
\begin{figure}[t]
  \centering
  \subfigure[Random walk with BFS]{
    \begin{minipage}[t]{0.5\linewidth}
      \centering
      \label{Fig:BFS}
      \includegraphics[width=\linewidth]{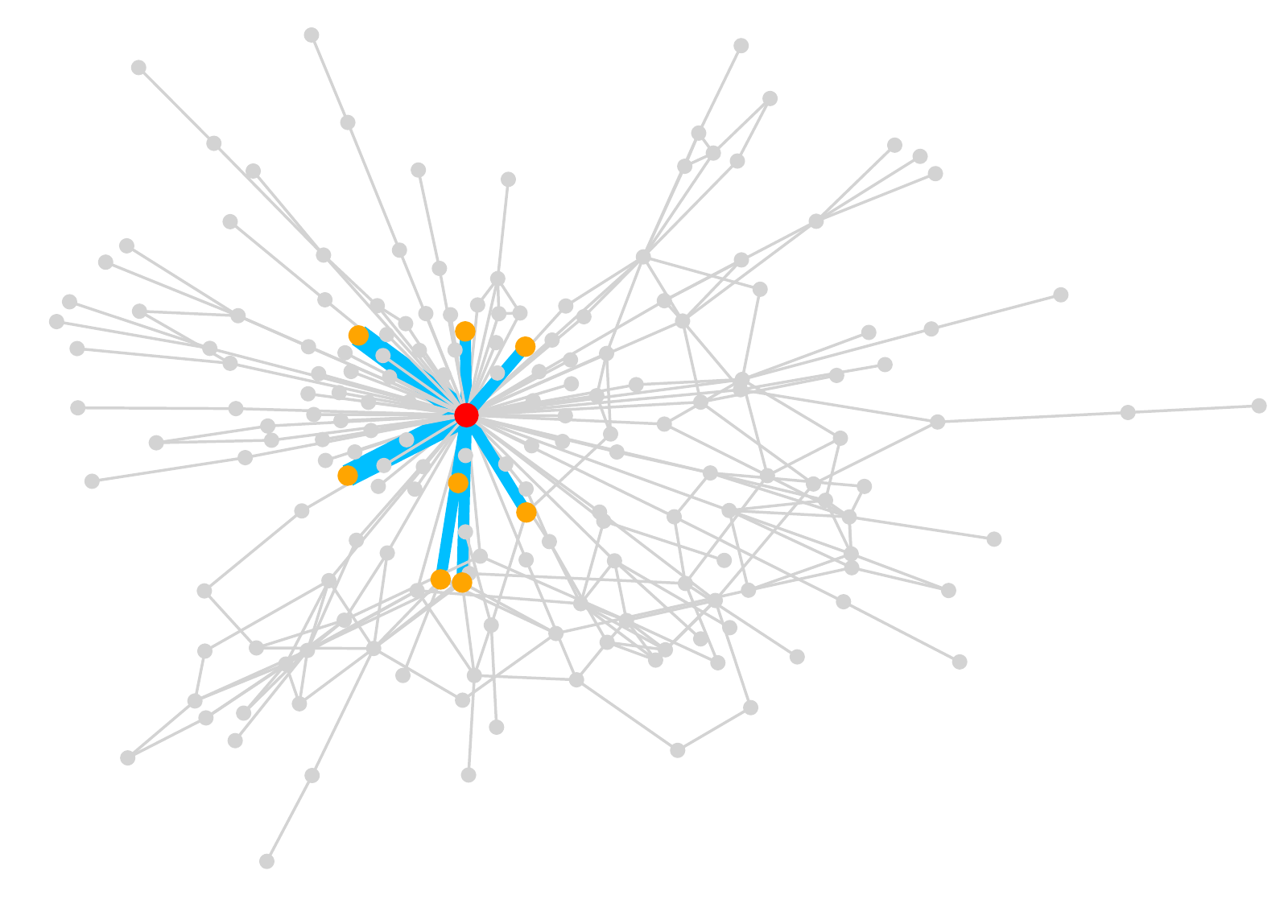}
    \end{minipage}%
  }%
  \subfigure[Random walk with DFS]{
    \begin{minipage}[t]{0.5\linewidth}
      \centering
      \label{Fig:DFS}
      \includegraphics[width=\linewidth]{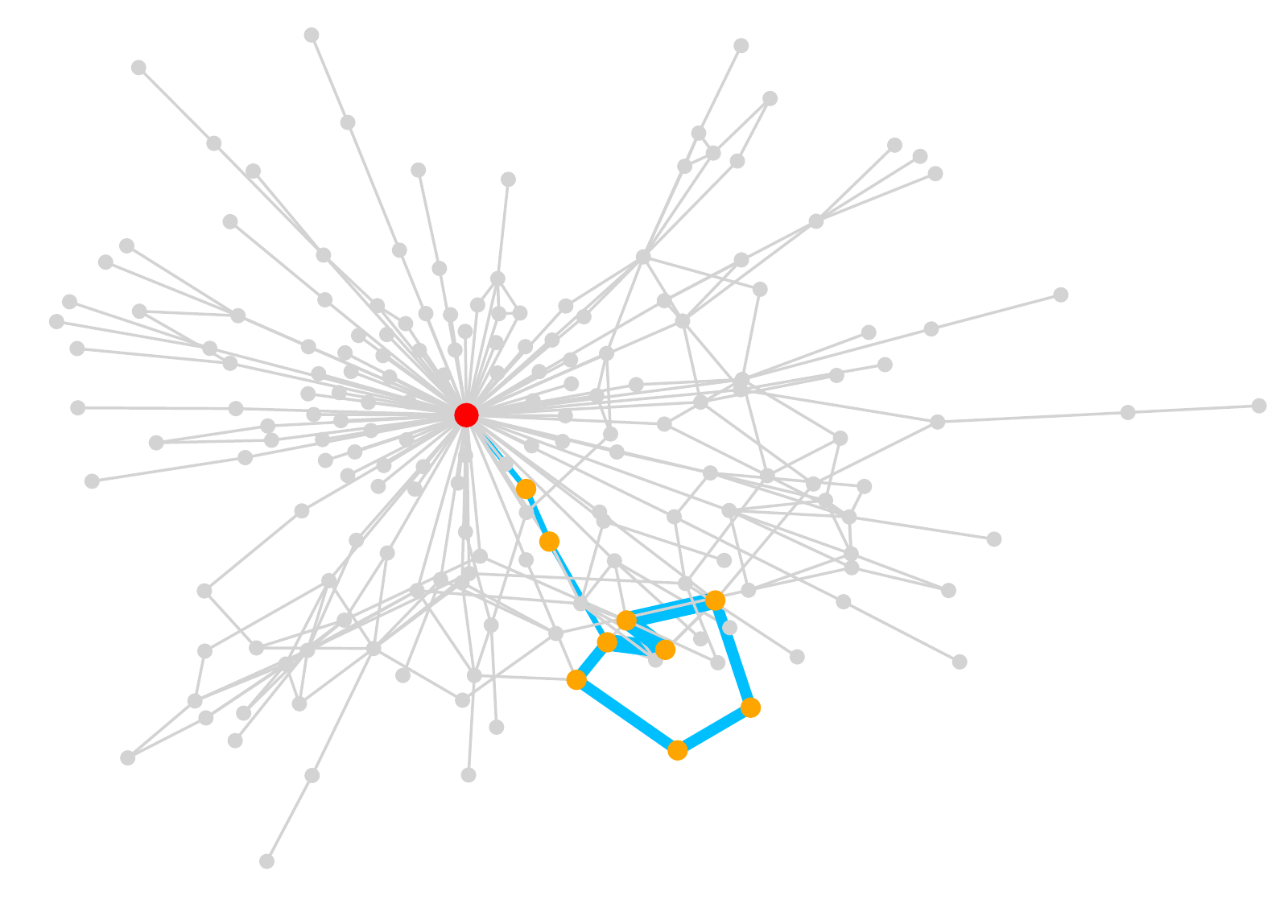}
    \end{minipage}%
  }%
  \caption{Illustration of BFS and DFS.}
  \label{Fig:BFSvsDFS}
\end{figure}
Basically, there are two types of search strategies used in random walks, i.e., BFS (Breadth First Search) and DFS (Depth First Search). Figure~\ref{Fig:BFSvsDFS} shows two random walks respectively based on BFS and DFS, where the red nodes are the source nodes, and the orange nodes are the sampled nodes. The sampling path are represented by the blue edges, and the thicker the blue edges, the more frequent they are visited. As we can see from Figure~\ref{Fig:BFS}, a random walk with BFS strategy tends to sample the nodes close to the source nodes, and some directly connected nodes are even sampled more than once to explore the local topology, which is in favor of preserving structural equivalence of the red node. On the contrary, the random walk shown Figure~\ref{Fig:DFS} is based on DFS strategy, which prefers nodes far from the source. From Figure~\ref{Fig:DFS} we can see that some distant nodes are visited more than once to determine the boundary of a community, which is in favor of preserving homophily of the red node.

\subsection{Problem Definition}
Let $G = (V, E)$ denote a network, where $V$ and $E$ are the node set and edge set, respectively. Suppose we want to learn node embeddings $\boldsymbol{Z} = \{ \boldsymbol{z}_v \in \mathbb{R}^m, v \in V\}$ for a specific network analysis task $T$, where $m$ is the dimensionality of the embeddings. To generate the embedding $\boldsymbol{z}_v$ of a node $v \in V$ for task $T$, a set of node sequences each of which starts from $v$ will be sampled as its context $C_v$, with a learnable policy function $\boldsymbol{\pi}$. Essentially the output of $\boldsymbol{\pi}$ is a probability distribution of search strategies according to which next node can be sampled. 

Based on the learnable policy function $\boldsymbol{\pi}$, the random walk can be parametrized as the function $\boldsymbol{S}(v; \boldsymbol{\pi})$, of which the output is just the sampled context, i.e., $C_v = \boldsymbol{S}(v; \boldsymbol{\pi})$. Let $\boldsymbol{C}$ be the contexts of all nodes, i.e., $\boldsymbol{C} = \bigcup_{v \in V}C_v $. Then once the contexts of nodes are sampled, they will be fed into a language model $\boldsymbol{f}$ to generate the node embedding for task $T$, i.e., $ \boldsymbol{Z} = \boldsymbol{f}( \boldsymbol{C})$. Similar to existing works, in this paper we use language model SkipGram \cite{mikolov2013efficient} to generate the node embeddings. At last, the embeddings will be used as input of task $T$ which is evaluated with metric $M_T: \{\boldsymbol{z}_v \in \mathbb{R}^m \} \to \mathbb{R}$. Based on the above definitions, our target problem can be conceptually formulated as follow:

Given an analysis task $T$ over a given network $G = (V, E)$ with evaluation metric $M_T$, we want to learn the policy $\boldsymbol{\pi}$, so that the embeddings $\boldsymbol{Z}$ generated by a given language model $\boldsymbol{f}$ can lead optimal performance of $T$ in terms of metric $M_T$, i.e., 
\begin{equation}
\argmax_{\boldsymbol{\pi}}M_T\big(  \boldsymbol{Z}  \big).
\label{Eq_Problem}
\end{equation}

 \begin{figure}[t]
    \centering
    \includegraphics[width=0.7\linewidth]{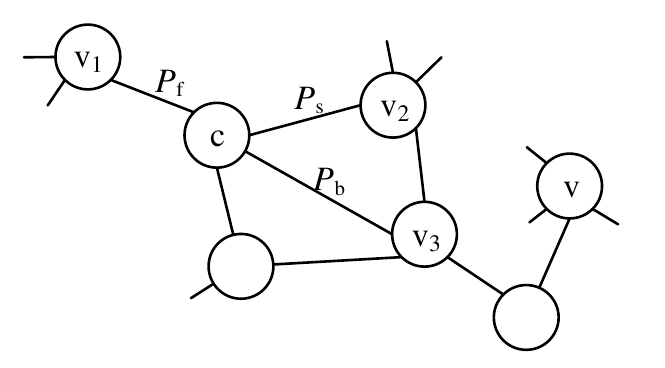}
    \caption{Illustration of sampling.}
    \label{Fig:Sampling}
  \end{figure}

\section{Proposed Model}
\label{Sec:ProposedModel}
\subsection{Sampling Strategy Learning}
As we have mentioned, the proposed model MLANE treats the sampling process $\boldsymbol{S}$ as a Markov Decision Process (MDP) \cite{puterman2014markov} so that the sampling strategy learning can be solved via a meta-learning framework based on reinforcement learning.
 
\subsubsection{State Space}
Intuitively, DFS strategy intends to sample nodes far from source node, while BFS prefers to nodes close to source node. To reflect the effect of search strategy for a given source node, we define the state space of a sampling process combining the source node and the distance from the current sample node to the source node,
\begin{equation}
\mathcal{S} = \{ (v, d)| v \in V, d \in [0,1, \dots, d_{max}] \},
\end{equation}
where $v \in V$ is the source node for which the context is sampled, $d$ is the distance from current sample node to $v$, and $d_{max}$ is the diameter of the network. Note that the initial state of the sampling process for a specific node $v$ is $s_0(v) = (v, 0)$.

\subsubsection{Action Space}
Now the action space consists of three possible actions to search the next node, i.e., 
\begin{equation}
\mathcal{A} = \{ a_\text{f}, a_\text{s}, a_\text{b} \}.
\end{equation} 
Figure \ref{Fig:Sampling} shows an example of random walk starting from node $v$ and now residing node $c$ which is $d_c$ hops apart from $v$. Nodes $v_1$, $v_2$, and $v_3$ are the direct neighbors of node $c$. The first option $a_\text{f}$ is to move one step forward, i.e., sampling $v_1$ as the next sample node. The second option $a_\text{s}$ is to keep the distance unchanged, i.e., sampling $v_2$ as the next sample node. The last option $a_\text{b}$ is to move one step backward, i.e., sampling $v_3$ as the next sample node. 
Note that action $a_\text{f}$ corresponds to DFS, while actions $a_\text{s}$ and $a_\text{b}$ correspond to BFS.

 \begin{figure}[t]
    \centering
    \includegraphics[width=0.8\linewidth]{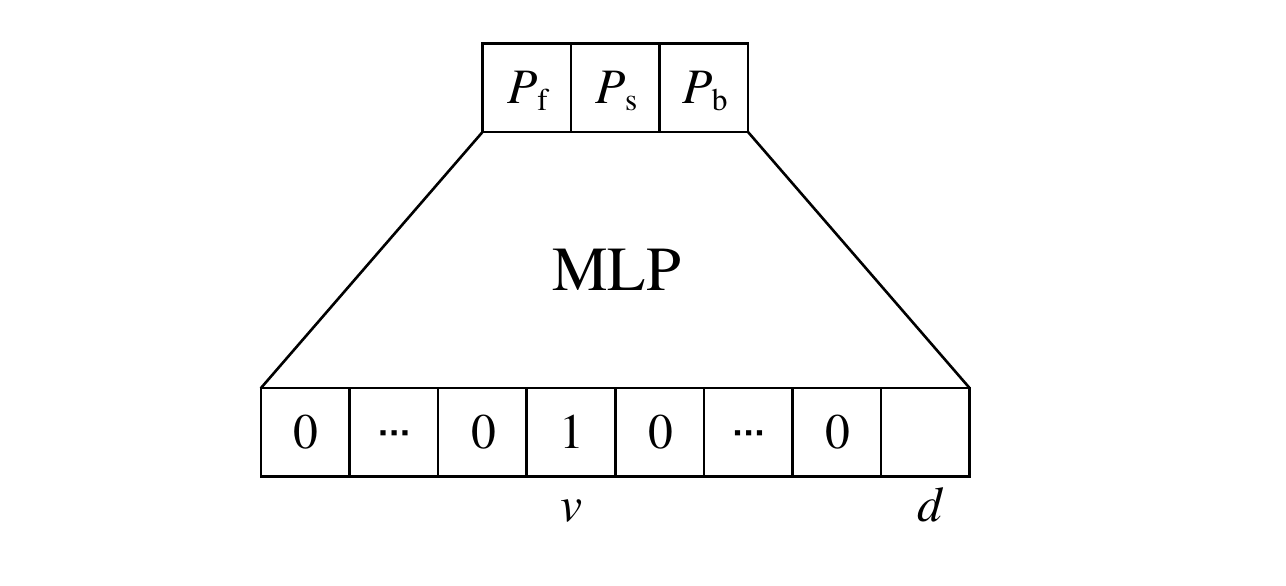}
    \caption{Policy network $\boldsymbol{\pi}$.}
    \label{Fig:Policy}
  \end{figure}

\subsubsection{Policy network}
As we have mentioned, policy function defines a probability distribution over the action space. Based on this idea, we concretize the policy function with the form $\boldsymbol{\pi}(a | s; \boldsymbol{\theta})$, which is the probability of performing action $a \in \mathcal{A}$ at state $s \in \mathcal{S}$, where $\boldsymbol{\theta}$ is the learnable parameters. Particularly, we realize $\boldsymbol{\pi}(a | s; \boldsymbol{\theta})$ as an multilayer perceptron (MLP) network \cite{nielsen2015neural}, as shown in Figure \ref{Fig:Policy}, and then $\boldsymbol{\theta}$ represents the weights in the MLP network that will be learned during the meta-learning. As we can see from Figure \ref{Fig:Policy}, we encode a state $s = (v, d)$ as the input vector to the MLP network, which is the concatenation of a one-hot vector representing node $v$ and a scalar $d$ representing the distance. The output is a 3-dimensional vector generated by Softmax function, where $P_\text{f}$, $P_\text{s}$, and $P_{b}$ are the probabilities of actions $a_\text{f}$, $a_\text{s}$, and $a_{b}$, respectively.

\subsubsection{Transition}
Suppose we sample $K$ sequences (walks) of length $L$ for a node $v$ as its context $C_v$. Let $v$'s $i$-th node sequence $c_v^{(i)} = <v, v_1^{(i)}, \dots, v_L^{(i)}>$ ($1 \le i \le K$), where $v_l^{(i)}$ ($1 \le l \le L$) is the $l$-th sample node of the $i$-th sequence of node $v$. Note that $c_v^{(i)}$ is sampled by the $i$-th transition trajectory of MDP, $\psi_v^{(i)} = <s_0^{(i)} (v)$, $a_0^{(i)}(v)$, $s_1^{(i)}(v)$, $a_1^{(i)}(v)$, $\dots$, $s_{L-1}^{(i)}(v)$, $a_{L-1}^{(i)}(v)$, $s_{L}^{(i)}(v)$, $a_{L}^{(i)}(v)>$, where $s_j^{(i)}(v) \in \mathcal{S}$ and $a_j^{(i)}(v) \in \mathcal{A}$ ($0 \le j \le L$) are the state and action at step $j$, respectively. 
It is easy to see that the probability of state transition from $s_{j-1}^{(i)}(v)$ to $s_j^{(i)}(v)$ under action $a_{j-1}^{(i)}(v) \in \mathcal{A}$ is 1, and hence the probability of $\psi_v^{(i)}$ can be obtained by 
\begin{equation}
\label{Eq:TransitionFirst}
\boldsymbol{\rho} (\psi_v^{(i)};\boldsymbol{\theta}) = \prod_{j=0}^L \boldsymbol{\pi}(a_j^{(i)}(v) | s_j^{(i)}(v); \boldsymbol{\theta}).
\end{equation}
Then the probability of the transition trajectory set $\Psi_v$ of node $v$ is 
\begin{equation}
\label{Eq:TransitionSecond}
\boldsymbol{\rho}(\Psi_v; \boldsymbol{\theta}) = \prod_{i = 1}^{K}\boldsymbol{\rho} (\psi_v^{(i)};\boldsymbol{\theta}). 
\end{equation}
Therefore, we can evaluate the probability of the transition trajectory set $\boldsymbol{\Psi}$ of all nodes by the following equation:
\begin{equation}
\label{Eq:TransitionThird}
\boldsymbol{\rho}(\boldsymbol{\Psi}; \boldsymbol{\theta}) = \prod_{v \in V}\boldsymbol{\rho}(\Psi_v; \boldsymbol{\theta}).
\end{equation}

\renewcommand{\algorithmicrequire}{\textbf{Input:}}
\renewcommand{\algorithmicensure}{\textbf{Output:}}

\begin{algorithm}[t]
\caption{MLANE}
\label{Alg:MLANE}
\begin{algorithmic}[1]
\REQUIRE ~~ \\
Network $G = (V, E)$; Embedding dimensionality $m$; Number of walks per node $K$; Walk length $L$; Sliding window size for SkipGram $w$; Model for analysis task $T$
 \ENSURE ~~ \\
Set of node embeddings $\boldsymbol{Z}$
\STATE Randomly initialize parameters $\boldsymbol{\theta}$ of policy network $\boldsymbol{\pi}$.
\WHILE{policy network $\boldsymbol{\pi}$ does not converge}
\STATE Sample contexts $\boldsymbol{C}$ for all nodes with policy $\boldsymbol{\pi}(\boldsymbol{\theta})$.
\STATE Generate embeddings $\boldsymbol{Z} = \text{SkipGram}(\boldsymbol{C}, w, m)$.
\STATE Train and evaluate $T$ on $\boldsymbol{Z}$, $R = M_T(\boldsymbol{Z})$.
\STATE Update $\boldsymbol{\theta}$ according to Equation (\ref{Eq:PolicyLearning}).
\ENDWHILE
\STATE \textbf{return} $\boldsymbol{Z}$
\end{algorithmic}
\end{algorithm}

\subsubsection{Reward}
As mentioned before, MLANE generates the embeddings $\boldsymbol{Z}$ using SkipGram, i.e., $\boldsymbol{Z}=\text{SkipGram}(\boldsymbol{C}, w, $ $m)$, where $w$ is the sliding window size for SkipGram, and $m$ is the embedding dimension size. The generated embeddings $\boldsymbol{Z}$ will be applied to the given task $T$. We regard the performance $M_T(\boldsymbol{Z})$ of task $T$ over embeddings $\boldsymbol{Z}$ as the final reward $R$, i.e., $R = M_T(\boldsymbol{Z})$.

\subsubsection{Policy Learning}
Different from traditional meta-learning methods which aim at learning the parameters of optimizer, MLANE borrows the idea of meta-learning to learn the parameters $\boldsymbol{\theta}$ of the sampling policy function $\boldsymbol{\pi}$ using policy gradient. The learning objective is defined as:
\begin{equation}
\argmax_{\boldsymbol{\theta}}J(\boldsymbol{\theta}) = \mathbb{E}_{\boldsymbol{\rho}(\boldsymbol{\Psi};\boldsymbol{\theta})} [R].
\end{equation}
It is easy to show that the gradient of $J(\boldsymbol{\theta})$ is 
\begin{equation}
\nabla J(\boldsymbol{\theta}) \propto R\frac{\partial \log(\boldsymbol{\rho}(\boldsymbol{\Psi}; \boldsymbol{\theta})) }{\partial \boldsymbol{\theta}}.
\label{Eq:DerivativeOfJ}
\end{equation}
The the parameters of policy function can be updated as 
\begin{equation}
\boldsymbol{\theta} = \boldsymbol{\theta} + \alpha \nabla J(\boldsymbol{\theta}),
\label{Eq:PolicyLearning}
\end{equation}
where $\alpha$ is learning rate.

\subsection{MLANE and Its Convergence}
Now we can incorporate the sampling strategy learning with the embedding learning into Algorithm \ref{Alg:MLANE}. Essentially MLANE is a meta-learner able to learn parameters $\boldsymbol{\theta}$ of the policy network $\boldsymbol{\pi}$ in an end-to-end fashion. Now we show the convergence of MLANE, for which we first present the following lemma \cite{nesterov1998introductory}: 
\begin{lemma}
\label{Lemma:LowerBound}
If the gradient of function $g$ (i.e. $\nabla g$) is locally Lipschitz continuous with Lipschitz constant $\mathcal{M}$, i.e.,
\begin{equation}
\| \nabla g(\boldsymbol{y}) - \nabla g(\boldsymbol{x}) \|_2 \leq \mathcal{M}\| \boldsymbol{y} - \boldsymbol{x} \|_2 ,
\end{equation}
then the following inequation holds: 
\begin{equation}
g(\boldsymbol{y}) \geq  g(\boldsymbol{x}) +  \nabla g(\boldsymbol{x})^T(\boldsymbol{y}-\boldsymbol{x}) - \frac{\mathcal{M}}{2} \| \boldsymbol{y} - \boldsymbol{x} \|_2^2 ,
\label{Eq:LowerBound}
\end{equation}
where $\boldsymbol{x}, \boldsymbol{y} \in \boldsymbol{X}$ and $\boldsymbol{X}$ is a local domain of $g$.
\end{lemma}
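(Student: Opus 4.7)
The plan is to reduce the claim to a statement about the remainder in a first-order Taylor expansion and to control that remainder using the Lipschitz hypothesis on $\nabla g$. The standard tool is the fundamental theorem of calculus applied along the line segment joining $\boldsymbol{x}$ and $\boldsymbol{y}$. I would first assume (or shrink to) a convex neighborhood in $\boldsymbol{X}$ containing this segment so that the hypothesis $\|\nabla g(\boldsymbol{u}) - \nabla g(\boldsymbol{v})\|_2 \leq \mathcal{M}\|\boldsymbol{u}-\boldsymbol{v}\|_2$ is valid at every point of the segment; local Lipschitz continuity of $\nabla g$ then also guarantees that $g$ is $C^1$ there, which is needed for the integral representation.

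The first step is to write, via the chain rule and the fundamental theorem of calculus,
\begin{equation}
g(\boldsymbol{y}) - g(\boldsymbol{x}) = \int_0^1 \nabla g\bigl(\boldsymbol{x}+t(\boldsymbol{y}-\boldsymbol{x})\bigr)^T(\boldsymbol{y}-\boldsymbol{x})\,dt.
\end{equation}
Subtracting the linear term $\nabla g(\boldsymbol{x})^T(\boldsymbol{y}-\boldsymbol{x}) = \int_0^1 \nabla g(\boldsymbol{x})^T(\boldsymbol{y}-\boldsymbol{x})\,dt$ from both sides yields the remainder identity
\begin{equation}
g(\boldsymbol{y}) - g(\boldsymbol{x}) - \nabla g(\boldsymbol{x})^T(\boldsymbol{y}-\boldsymbol{x}) = \int_0^1 \bigl[\nabla g\bigl(\boldsymbol{x}+t(\boldsymbol{y}-\boldsymbol{x})\bigr) - \nabla g(\boldsymbol{x})\bigr]^T(\boldsymbol{y}-\boldsymbol{x})\,dt.
\end{equation}

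The second step is to bound the right-hand side from below. By Cauchy--Schwarz, the magnitude of the integrand is at most $\|\nabla g(\boldsymbol{x}+t(\boldsymbol{y}-\boldsymbol{x})) - \nabla g(\boldsymbol{x})\|_2\cdot\|\boldsymbol{y}-\boldsymbol{x}\|_2$, which by the Lipschitz hypothesis is at most $\mathcal{M}\,t\,\|\boldsymbol{y}-\boldsymbol{x}\|_2^2$. Integrating gives $\int_0^1 \mathcal{M} t \|\boldsymbol{y}-\boldsymbol{x}\|_2^2\,dt = \tfrac{\mathcal{M}}{2}\|\boldsymbol{y}-\boldsymbol{x}\|_2^2$. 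Hence the remainder is at least $-\tfrac{\mathcal{M}}{2}\|\boldsymbol{y}-\boldsymbol{x}\|_2^2$, and rearranging yields (\ref{Eq:LowerBound}). (The same calculation simultaneously produces the matching upper bound, i.e.\ the classical descent inequality, though only the lower form is needed here.)

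The main obstacle is, in truth, only bookkeeping: verifying that the segment $\{\boldsymbol{x}+t(\boldsymbol{y}-\boldsymbol{x}):t\in[0,1]\}$ lies inside the local domain on which the Lipschitz constant $\mathcal{M}$ is valid, and cleanly justifying the Cauchy--Schwarz/linearity-of-integral steps. No deeper machinery is required, since the argument is a standard one-variable estimate against the Taylor remainder; the bound is tight whenever $g$ is quadratic with Hessian of norm exactly $\mathcal{M}$.
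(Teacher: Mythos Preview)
Your argument is correct and is exactly the standard proof of this inequality (often called the descent lemma, here in its lower-bound form): integral representation along the segment, subtract the linear part, bound the integrand via Cauchy--Schwarz and the Lipschitz hypothesis, integrate $t$ to pick up the factor $\tfrac{1}{2}$. The only caveats you raise---convexity of the local domain so the segment stays inside, and $C^1$ regularity so the integral representation is valid---are the right ones and are routine.

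As for comparison with the paper: the paper does \emph{not} supply its own proof of this lemma. It quotes the statement as a known result from Nesterov's \emph{Introductory Lectures on Convex Optimization} and then uses it as a black box inside the proof of Theorem~\ref{Theorem:Converge}. Your write-up is therefore more detailed than what the paper offers, and it matches the textbook argument that the citation points to.
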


Let $\boldsymbol{\theta}^i$ denote the parameters output by the $i$th iteration of MLANE, and according to Equation (\ref{Eq:PolicyLearning}) we have
\begin{equation}
\boldsymbol{\theta}^{i} - \boldsymbol{\theta}^{i-1} = \alpha \nabla J(\boldsymbol{\theta}^{i-1}),
\label{Eq:ParamUpdate}
\end{equation}
where $\alpha$ is the learning rate. Then the following theorem ensures that MLANE will eventually converge to the optimal parameters $\boldsymbol{\theta}^\ast$ after sufficient iterations.


\begin{theorem}
\label{Theorem:Converge}
If the learning rate $\alpha \leq \frac{1}{\mathcal{M}}$, where $\mathcal{M}$ is the Lipschitz constant introduced in Lemma \ref{Lemma:LowerBound}, then:
\begin{equation}
J(\boldsymbol{\theta}^\ast) - J(\boldsymbol{\theta}^k)  \leq \frac{\| \boldsymbol{\theta}^{k}-\boldsymbol{\theta}^\ast \|_2^2}{2\alpha k}.
\label{Eq:Converge}
\end{equation}
\end{theorem}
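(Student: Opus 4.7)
The plan is to follow the textbook $O(1/k)$ convergence argument for gradient ascent on a smooth, concave objective: Lemma \ref{Lemma:LowerBound} will supply the smoothness half of the argument, concavity of $J$ (which must be assumed) will supply the other half, and a polarization identity will turn the resulting per-step inequality into a telescoping sum.

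First, I would apply Lemma \ref{Lemma:LowerBound} with $g = J$, $\boldsymbol{x} = \boldsymbol{\theta}^{i-1}$, $\boldsymbol{y} = \boldsymbol{\theta}^i$, substitute the update $\boldsymbol{\theta}^i - \boldsymbol{\theta}^{i-1} = \alpha \nabla J(\boldsymbol{\theta}^{i-1})$ from Equation (\ref{Eq:ParamUpdate}), and use the step-size condition $\alpha \le 1/\mathcal{M}$ to collapse the coefficient $\alpha(1 - \mathcal{M}\alpha/2)$ down to at least $\alpha/2$. This yields the sufficient-ascent inequality $J(\boldsymbol{\theta}^i) \ge J(\boldsymbol{\theta}^{i-1}) + \tfrac{\alpha}{2}\|\nabla J(\boldsymbol{\theta}^{i-1})\|_2^2$, so in particular $\{J(\boldsymbol{\theta}^i)\}$ is monotonically non-decreasing.

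Next, I would combine this with the concavity inequality $J(\boldsymbol{\theta}^\ast) \le J(\boldsymbol{\theta}^{i-1}) + \nabla J(\boldsymbol{\theta}^{i-1})^T(\boldsymbol{\theta}^\ast - \boldsymbol{\theta}^{i-1})$ to get $J(\boldsymbol{\theta}^\ast) - J(\boldsymbol{\theta}^i) \le \nabla J(\boldsymbol{\theta}^{i-1})^T(\boldsymbol{\theta}^\ast - \boldsymbol{\theta}^{i-1}) - \tfrac{\alpha}{2}\|\nabla J(\boldsymbol{\theta}^{i-1})\|_2^2$. Rewriting $\nabla J(\boldsymbol{\theta}^{i-1}) = (\boldsymbol{\theta}^i - \boldsymbol{\theta}^{i-1})/\alpha$ via the update rule and applying the polarization identity $2\boldsymbol{u}^T\boldsymbol{v} = \|\boldsymbol{u}\|_2^2 + \|\boldsymbol{v}\|_2^2 - \|\boldsymbol{u}-\boldsymbol{v}\|_2^2$ converts the right-hand side into the telescoping expression $\tfrac{1}{2\alpha}\bigl(\|\boldsymbol{\theta}^{i-1} - \boldsymbol{\theta}^\ast\|_2^2 - \|\boldsymbol{\theta}^i - \boldsymbol{\theta}^\ast\|_2^2\bigr)$. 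Summing from $i = 1$ to $k$ cancels the telescope and leaves at most $\tfrac{1}{2\alpha}\|\boldsymbol{\theta}^0 - \boldsymbol{\theta}^\ast\|_2^2$ on the right, and monotonicity of $J(\boldsymbol{\theta}^i)$ lets me bound the left sum below by $k\bigl(J(\boldsymbol{\theta}^\ast) - J(\boldsymbol{\theta}^k)\bigr)$; dividing through by $k$ delivers the claimed $O(1/k)$ rate.

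The main obstacle is that Lemma \ref{Lemma:LowerBound} alone is insufficient for a function-value gap bound, since it provides only one-sided smoothness; the argument crucially needs concavity of $J$ to upper-bound $J(\boldsymbol{\theta}^\ast) - J(\boldsymbol{\theta}^{i-1})$ by a first-order term. I would therefore flag concavity of $J$ on a neighborhood of $\boldsymbol{\theta}^\ast$ as an implicit hypothesis, since without it the telescoping step breaks down and one can only salvage a stationary-point guarantee of the form $\min_{i \le k} \|\nabla J(\boldsymbol{\theta}^i)\|_2^2 = O(1/k)$ rather than a gap in objective value. A secondary, presumably cosmetic, issue is that the standard telescoping naturally produces $\|\boldsymbol{\theta}^0 - \boldsymbol{\theta}^\ast\|_2^2$ in the numerator rather than the stated $\|\boldsymbol{\theta}^k - \boldsymbol{\theta}^\ast\|_2^2$ of Equation (\ref{Eq:Converge}); I would expect this to be reconciled against the authors' intended iterate indexing before finalizing the proof.
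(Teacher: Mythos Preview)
Your approach is essentially identical to the paper's: it applies Lemma~\ref{Lemma:LowerBound} along the gradient step to get the sufficient-ascent inequality, invokes local concavity of $J$ exactly as you flagged (the paper states this as an explicit assumption mid-proof), rewrites the per-step bound as a telescoping difference of squared distances, and averages using monotonicity. The only addition in the paper is a preliminary paragraph arguing, via the softmax form of $\boldsymbol{\pi}$, that $\nabla J$ is indeed locally Lipschitz so that Lemma~\ref{Lemma:LowerBound} applies. Your suspicion about the numerator is well-founded: the paper's telescoping line reads $\|\boldsymbol{\theta}^{k}-\boldsymbol{\theta}^\ast\|_2^2 - \|\boldsymbol{\theta}^{0}-\boldsymbol{\theta}^\ast\|_2^2$, with the indices swapped relative to what the standard argument actually yields, so the $\boldsymbol{\theta}^k$ in Equation~(\ref{Eq:Converge}) appears to be a typo for $\boldsymbol{\theta}^0$.
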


\begin{proof}
Combining Equations (\ref{Eq:TransitionFirst}), (\ref{Eq:TransitionSecond}), (\ref{Eq:TransitionThird}), and (\ref{Eq:DerivativeOfJ}), we have:
\begin{equation}
\nabla J(\boldsymbol{\theta}) \propto R\sum_{v \in V}\sum_{i = 1}^{K}\sum_{j=0}^L \frac{\partial \log(\boldsymbol{\pi}(a_j^{(i)}(v) | s_j^{(i)}(v); \boldsymbol{\theta})) }{\partial \boldsymbol{\theta}}.
\label{Eq:DerivativeOfJSum}
\end{equation}
Meanwhile, $\boldsymbol{\pi}(a_j^{(i)}(v) | s_j^{(i)}(v); \boldsymbol{\theta})$ can be written as:
\begin{equation}
\boldsymbol{\pi}(a_j^{(i)}(v) | s_j^{(i)}(v); \boldsymbol{\theta}) = \frac{\exp(\boldsymbol{\varphi}(s_j^{(i)}(v), a_j^{(i)}(v))^T\boldsymbol{\theta})}{\sum_{a^\prime \in \mathcal{A}} \exp(\boldsymbol{\varphi}(s_j^{(i)}(v), a^\prime)^T\boldsymbol{\theta})},
\label{Eq:PiSoftmax}
\end{equation}
where $\boldsymbol{\varphi}(s_j^{(i)}(v), a_j^{(i)}(v))$ is the feature vector of state-action pair $(s_j^{(i)}(v)$, $a_j^{(i)}(v))$ \cite{silver2014deterministic}.
By combining Equations (\ref{Eq:DerivativeOfJSum}) and (\ref{Eq:PiSoftmax}), we have:
\begin{equation}
\begin{aligned}
\nabla J(\boldsymbol{\theta}) \propto 
&R\sum_{v \in V}\sum_{i = 1}^{K}\sum_{j=0}^L[\boldsymbol{\varphi}(s_j^{(i)}(v), a_j^{(i)}(v))] \\
&- R\sum_{v \in V}\sum_{i = 1}^{K}\sum_{j=0}^L\sum_{a^\prime \in \mathcal{A}}[\boldsymbol{\varphi}(s_j^{(i)}(v), a^\prime)\boldsymbol{\pi}(a^\prime | s_j^{(i)}(v); \boldsymbol{\theta})]
\label{Eq:LipschitzContinuousOfJ}
\end{aligned}
\end{equation}
From Equation (\ref{Eq:LipschitzContinuousOfJ}), we could notice that the left term of $\nabla J(\boldsymbol{\theta})$ is a constant, while the right term of $\nabla J(\boldsymbol{\theta})$ is a summation over Softmax functions, and hence $\nabla J(\boldsymbol{\theta})$ is Locally Lipschitz continuous \cite{boyd2004convex, gao2017properties}.
Therefore, according to Lemma \ref{Lemma:LowerBound}, Inequation (\ref{Eq:LowerBound}) holds true for $J(\boldsymbol{\theta})$.
By substituting Equation (\ref{Eq:ParamUpdate}) into Inequation (\ref{Eq:LowerBound}) with $\boldsymbol{y} = \boldsymbol{\theta}^{i}$, $\boldsymbol{x} = \boldsymbol{\theta}^{i-1}$ and replacing function $g$ with $J$, we get
\begin{equation}
J(\boldsymbol{\theta}^{i}) \geq  J(\boldsymbol{\theta}^{i-1}) +  \alpha(1-\frac{\alpha \mathcal{M}}{2}) \| \nabla J(\boldsymbol{\theta}^{i-1}) \|_2^2.\label{Eq:LowerBoundTheta}
\end{equation}
Since $\alpha \leq \frac{1}{\mathcal{M}}$, the right side of Inequation (\ref{Eq:LowerBoundTheta}) can be shrunk further, which leads to
\begin{equation}
J(\boldsymbol{\theta}^{i}) \geq  J(\boldsymbol{\theta}^{i-1}) + \frac{\alpha}{2} \| \nabla J(\boldsymbol{\theta}^{i-1}) \|_2^2.
\label{Eq:LowerBoundRefined}
\end{equation}
As $J$ is locally concave, then
\begin{equation}
J(\boldsymbol{\theta}^{i-1}) \geq  J(\boldsymbol{\theta}^\ast) + \nabla J(\boldsymbol{\theta}^{i-1})^T(\boldsymbol{\theta}^{i-1}-\boldsymbol{\theta}^\ast).
\label{Eq:ConcaveProp}
\end{equation}
By combining Inequations (\ref{Eq:LowerBoundRefined}) and (\ref{Eq:ConcaveProp}), we have:
\begin{align}
J(&\boldsymbol{\theta}^\ast) - J(\boldsymbol{\theta}^{i}) \nonumber \\ 
& \leq -(\nabla J(\boldsymbol{\theta}^{i-1})^T(\boldsymbol{\theta}^{i-1}-\boldsymbol{\theta}^\ast) + \frac{\alpha}{2} \| \nabla J(\boldsymbol{\theta}^{i-1}) \|_2^2)  \\
& = \frac{1}{2\alpha} (\| \boldsymbol{\theta}^{i-1}-\boldsymbol{\theta}^\ast \|_2^2  -  \| \alpha \nabla J(\boldsymbol{\theta}^{i-1})^T + (\boldsymbol{\theta}^{i-1}-\boldsymbol{\theta}^\ast) \|_2^2). 
\label{Eq:Eight}
\end{align}
Plugging Equation (\ref{Eq:ParamUpdate}) into Inequation (\ref{Eq:Eight}), we obtain:
\begin{equation}
J(\boldsymbol{\theta}^\ast) - J(\boldsymbol{\theta}^{i}) \leq \frac{1}{2\alpha}(\| \boldsymbol{\theta}^{i-1}-\boldsymbol{\theta}^\ast \|_2^2 - \| \boldsymbol{\theta}^i-\boldsymbol{\theta}^\ast \|_2^2).
\end{equation}
By summing over iterations, we finally obtain:
\begin{align}
J(\boldsymbol{\theta}^\ast) - J(\boldsymbol{\theta}^k) & \leq \frac{1}{k}\sum_{i=1}^{k}(J(\boldsymbol{\theta}^\ast) - J(\boldsymbol{\theta}^i)) \\
& \leq \frac{1}{2\alpha k}(\| \boldsymbol{\theta}^{k}-\boldsymbol{\theta}^\ast \|_2^2 - \| \boldsymbol{\theta}^0-\boldsymbol{\theta}^\ast \|_2^2) \\
& \leq \frac{\| \boldsymbol{\theta}^{k}-\boldsymbol{\theta}^\ast \|_2^2}{2\alpha k}.
\label{Eq:Converge}
\end{align}
\end{proof}

Theorem \ref{Theorem:Converge} tells us that if the number of iterations $k$ is big enough, the loss $J(\boldsymbol{\theta}^k)$ can approach to the optimal value $J(\boldsymbol{\theta}^\ast)$ with arbitrary small distance, which confirms the convergence of MLANE.

\begin{table*}[t]
  \renewcommand\arraystretch{1}
  \caption{The statistics of datasets.}
  \begin{center}
	\begin{tabular}{@{}lccccccc@{}}
	\toprule
	\textbf{Datasets} & $|V|$  & $|E|$   & Number of labels & \multicolumn{1}{l}{Node classification}  & \multicolumn{1}{l}{Link prediction} &\multicolumn{1}{l}{Node Clustering} & \multicolumn{1}{l}{Case study}  \\ \midrule
     Citeseer          & 3,327  & 4,676   & 6                 & $\surd$                                 & $\surd$                             & $\surd$                                    &                                 \\
	Cora              & 2,708  & 5,278   & 7                 & $\surd$                                 & $\surd$                             & $\surd$                                    &                                 \\  
	BlogCatalog       & 10,312 & 333,983 & 39                & $\surd$                                 &                                     &                                            &                                 \\
	PPI               & 3,890  & 76,584  & 50                & $\surd$                                 &                                     &                                            & $\surd$                         \\
	Amazon            & 3,087  & 2,753   & 23                & $\surd$                                 &                                     &                                            &                                 \\
	HepTh             & 9,877  & 25,998  & none              &                                         & $\surd$                             &                                     &                                 \\ \bottomrule
	\end{tabular}
  \end{center}

  \label{Tbl:Datasets}
\end{table*}

\section{Experiments}
\label{Sec:Experiments}
The objective of experiments is to verify MLANE over tasks of node classification, link prediction, node clustering, and check the adaptiveness of MLANE with case study. The experiments are conducted on a single machine with 128GB RAM and 12 CPU cores at 3.5GHz.

\subsection{Experimental Setting}
\subsubsection{Datasets}
We evaluate MLANE on six real world datasets, including three citation networks (Citeseer$\footnote{\url{http://www.cs.umd.edu/~sen/lbc-proj/LBC.html}\label{umd}}$, Cora$\textsuperscript{\ref{umd}}$, and HepTh$\footnote{\url{http://snap.stanford.edu/data/ca-HepTh.html}}$), one social networks (BlogCatalog$\footnote{\url{http://socialcomputing.asu.edu/datasets/blogcatalog3}}$), one e-commercial network (Amazon$\footnote{\url{https://github.com/librahu/}}$), and one biology network (PPI \cite{grover2016node2vec}). As these datasets contain small and large, sparse and dense networks, they can reflect the comprehensive characteristics of the network embedding. The details of the datasets are described as follows, and their statistics and uses are summarized in Table \ref{Tbl:Datasets}.

\begin {itemize}

\item Citeseer: This is a citation network consisting of 3,327 nodes and 4,676 edges, where a node represents a publication and an edge represents a citation between two publications. Citeseer dataset contains six classes of nodes, including "Agents", "AI", "DB", "IR", "ML", and "HCI".

\item Cora: This is a citation network consisting of 2,708 nodes and 5,278 edges, where a node represents a publication and an edge represents a citation. Cora dataset contains seven classes of nodes, including "Case Based", "Genetic Algorithms", "Neural Networks", "Probabilistic Methods", "Reinforcement Learning", "Rule Learning", and "Theory".

\item BlogCatalog: This is social network consisting of 10,312 nodes and 333,983 edges, where a node represents a blogger in BlogCatalog website, and an edge represents the friendship between two nodes. We use the class labels provided by \cite{Zafarani+Liu:2009} which partitions the nodes into 39 groups.

\item PPI: This is a Protein-Protein Interaction network consisting of 3,890 nodes and 76,584 edges, where a node represents a Homo Sapiens gene, and an edge represents an interaction between two genes. There are fifty classes in PPI, including "Xenobiotic Metabolism", "Fatty Acid Metabolism", "Coagulation", "IR", "EMT”, to name a few.

\item Amazon: This is an e-commerce network consisting 2,753 item nodes and 334 brand nodes, and 2,753 edges, where an edge represents an item belongs to a brand. The item nodes are classified into twenty two types including "Books", "Electronics", "Movies and TV", etc., while the brand nodes constitute one "Brand" type. 

\item HepTh: This is also a citation network consisting of 9,877 nodes and 25,998 edges that are extracted from Arxiv of High Energy Physics-Theory category, where a node also represents a publication and an edge represents a citation between two publications. All the nodes in HepTh are of the same type, and it will be used for link prediction. 

\end{itemize}

\begin{table}[t]
  \renewcommand\arraystretch{}
  \caption{Hyperparameters setting of MLANE.}
  \setlength{\tabcolsep}{3.3mm}
  \centering
  \begin{tabular}{@{}lllll@{}}
    \toprule
    \multicolumn{1}{l|}{Dataset}      & $L$  & $K$  & $m$  & $w$  \\ \midrule
    \multicolumn{5}{c}{Node classification}                       \\ \midrule
    \multicolumn{1}{l|}{Cora}         & 80   & 40   & 128  & 10   \\
    \multicolumn{1}{l|}{Citeseer}     & 80   & 40   & 128  & 10   \\
    \multicolumn{1}{l|}{BlogCatalog}  & 80   & 40   & 128  & 10   \\
    \multicolumn{1}{l|}{PPI}          & 80   & 40   & 128  & 10   \\
    \multicolumn{1}{l|}{Amazon}       & 30   & 10   & 128  & 5    \\ \midrule
    \multicolumn{5}{c}{Link prediction} \\ \midrule
    \multicolumn{1}{l|}{Cora}         & 40   & 10   & 128  & 5    \\
    \multicolumn{1}{l|}{Citeseer}     & 40   & 10   & 128  & 5    \\
    \multicolumn{1}{l|}{HepTh}        & 40   & 10   & 128  & 5    \\ \bottomrule
  \end{tabular}

  \label{Tbl:ParamSetting}
\end{table}

\subsubsection{Baselines}
We use nine methods as baselines, including seven methods (DeepWalk \cite{perozzi2014deepwalk}, HOPE \cite{ou2016asymmetric}, LINE \cite{tang2015line}, SDNE \cite{wang2016structural},  AttentionWalk \cite{abu2018watch}, ProNE \cite{zhang2019prone}, GAT \cite{velivckovic2017graph}) that preserve homophily, four methods that preserve structural equivalence (struc2vec \cite{ribeiro2017struc2vec}, RiWalk \cite{xuewma2019riwalk}, DRNE \cite{tu2018deep}, Role2Vec\cite{ahmed2018learning}), and one method (node2vec \cite{grover2016node2vec}) that preserves both. Specially, GAT is used only for the node classification task as it needs label information of nodes to supervise the learning of node embeddings. 
The baselines are briefly introduced as follows.

\begin{itemize}
 \item DeepWalk \cite{perozzi2014deepwalk}: DeepWalk treats the node sequences sampled by random walks as sentences and feed them into the language model SkipGram to generate the node embeddings. Particularly, DeepWalk uses uniform distribution as the sampling policy for random walks.

 \item node2vec \cite{grover2016node2vec}: Similar to DeepWalk, node2vec also samples nodes sequences through random walks and generates node embeddings by applying SkipGram model. However, node2vec takes two hyperparameters $p$ and $q$ to perform biased random walks for generating nodes sequences.

 \item LINE \cite{tang2015line}: LINE can learn node embeddings that preserve both first-order and second-order proximity among nodes. Particularly, LINE uses BFS strategy, which is more reasonable for preserving second-order proximity.

 \item struc2vec \cite{ribeiro2017struc2vec}: This method first constructs a multi-layer graph to encode the structural similarity between nodes, then generates the node contexts by random walks over the multi-layer graph, and finally feeds the generated contexts into SkipGram for learning node embeddings. 

 \item HOPE \cite{ou2016asymmetric}: HOPE preserves the asymmetric transitivity between nodes by approximating the high-order proximity. It first constructs a matrix reflecting asymmetric transitivity, and then learns node embeddings by SVD on the constructed matrix.

 \item SDNE \cite{wang2016structural}: SDNE learns node embeddings through a deep auto-endcoder with a supervised component to preserve the first-order proximity between nodes and an unsupervised component to preserve the second-order proximity between nodes.
 
 \item GAT \cite{velivckovic2017graph}: Graph Attention Networks (GAT) are a variant of Graph Convolutional Network (GCN) \cite{kipf2016semi}, which can learn a representation for a node with attention to its neighborhood.  
 

 \item AttentionWalk \cite{abu2018watch}: AttentionWalk is a graph attention model that could learn the $walk\  length$ of random walk automatically through the graph attention mechanism.

 \item ProNE \cite{zhang2019prone}:  ProNE is a fast and scalable network embedding method, and it learns the node embeddings through spectral propagation.

 \item RiWalk \cite{xuewma2019riwalk}: RiWalk first adopts a graph kernel method to identify the role of nodes, and then applys random walks to sample nodes having similar roles for generating node sequences before using SkipGram model to learn node embeddings.

 \item Role2Vec \cite{ahmed2018learning}: Role2Vec first performs attributed random walks with the help of motifs to sample node sequences, and then also generates node embedding by SkipGram model.

 \item DRNE \cite{tu2018deep}: DRNE samples 1-hop neighbors and adopts LSTM to preserve the regular equivalence between nodes in a recursive way that aggregates embeddings of the neighboring nodes into the embedding of the target node.
\end{itemize}

\begin{table*}[t]
  \renewcommand\arraystretch{}
  \caption{Performance of node classification.}
  \begin{center}
	\begin{tabular}{@{}lcccccccccc@{}}
	\toprule
	\multicolumn{1}{l|}{ }    & \multicolumn{5}{c|}{$Micro\text{-}F1$}                                                                                         & \multicolumn{5}{c}{$Macro\text{-}F1$}                                                                     \\ \midrule
	\multicolumn{1}{l|}{Dataset}    & Cora           & Citeseer       & BlogCatalog    & PPI           & \multicolumn{1}{c|}{Amazon}         & Cora           & Citeseer       & Blogcatalog    & PPI            & Amazon         \\ \midrule
	\multicolumn{11}{c}{Homophily Preserving Methods}                                                                                                                                                                                                               \\ \midrule
	\multicolumn{1}{l|}{DeepWalk}  & 0.811          & 0.578          & 0.406          & 0.218          & \multicolumn{1}{c|}{0.828}          & 0.800          & 0.531          & 0.262          & 0.186          & 0.113          \\
	\multicolumn{1}{l|}{HOPE}      & 0.491          & 0.520          & 0.188          & 0.134          & \multicolumn{1}{c|}{0.880}          & 0.376          & 0.456          & 0.041          & 0.066          & 0.118          \\
	\multicolumn{1}{l|}{LINE}      & 0.696          & 0.468          & 0.369          & 0.204          & \multicolumn{1}{c|}{0.891}          & 0.694          & 0.428          & 0.237          & 0.172          & 0.120          \\
     \multicolumn{1}{l|}{AttentionWalk}        & 0.681          & 0.564          & 0.186          & 0.104          & \multicolumn{1}{c|}{0.790}          & 0.646          & 0.520          & 0.037          & 0.052          & 0.108          \\
     \multicolumn{1}{l|}{ProNE}     & 0.812          & 0.593          & 0.412          & 0.229          & \multicolumn{1}{c|}{0.844}          & 0.808          & 0.531          & 0.253          & 0.191          & 0.122          \\
	\multicolumn{1}{l|}{SDNE}      & 0.701          & 0.484          & 0.401          & 0.197          & \multicolumn{1}{c|}{0.886}          & 0.689          & 0.431          & 0.257          & 0.176          & 0.119          \\ 
     \multicolumn{1}{l|}{GAT}       & 0.816          & 0.535          & 0.407          & 0.224          & \multicolumn{1}{c|}{0.883}          & 0.806          & 0.487          & 0.264          & 0.183          & 0.121          \\ \midrule
	\multicolumn{11}{c}{Structural Equivalence Preserving Methods}                                                                                                                                                                                                  \\ \midrule
	\multicolumn{1}{l|}{struc2vec} & 0.297          & 0.281          & 0.132          & 0.086          & \multicolumn{1}{c|}{0.880}          & 0.161          & 0.244          & 0.044          & 0.070          & 0.127          \\
	\multicolumn{1}{l|}{RiWalk}    & 0.498          & 0.344          & 0.176          & 0.099          & \multicolumn{1}{c|}{0.885}          & 0.440          & 0.294          & 0.040          & 0.064          & 0.119          \\
	\multicolumn{1}{l|}{DRNE}      & 0.282          & 0.234          & 0.172          & 0.081          & \multicolumn{1}{c|}{0.850}          & 0.063          & 0.107          & 0.035          & 0.033          & 0.115          \\
	\multicolumn{1}{l|}{Role2Vec}  & 0.795          & 0.544          & 0.281          & 0.152          & \multicolumn{1}{c|}{0.861}          & 0.794          & 0.505          & 0.148          & 0.124          & 0.127          \\ \midrule
	\multicolumn{11}{c}{Both Properties Preserving Methods}                                                                                                                                                                                                         \\ \midrule
	\multicolumn{1}{l|}{node2vec}  & 0.816          & 0.594          & 0.411          & 0.229          & \multicolumn{1}{c|}{0.889}          & 0.808          & 0.543          & 0.275          & 0.188          & 0.125          \\
	\multicolumn{1}{l|}{MLANE}     & \textbf{0.832} & \textbf{0.624} & \textbf{0.416} & \textbf{0.232} & \multicolumn{1}{c|}{\textbf{0.897}} & \textbf{0.832} & \textbf{0.573} & \textbf{0.282} & \textbf{0.197} & \textbf{0.172} \\ \bottomrule
	\end{tabular}
  \end{center}
  \label{Tbl:Classification}
\end{table*}

\subsubsection{Metrics}
Similar to existing works \cite{wang2016structural}, we use $Micro\text{-}F1$ and $Macro\text{-}F1$ as the metrics for node classification, and $precision\text{@}k$ 
for link prediction, and $purity$ and the $NMI$ (Normalized Mutual Information) for node clustering. To save the space, here we just give the definitions of $purity$ and $NMI$, and one can refer to \cite{wang2016structural} for the definitions of the remaining metrics. 

\begin{itemize}

  \item $purity$ is defined as:
    \begin{equation*}
      \begin{split}
        purity = \frac{1}{n}\sum_{i=1}^{t}\max_j|cluster(i)\cap class(j)|
      \end{split}
    \end{equation*}
    where $n$ is the total number of clustered points, $t$ is the number of clusters, $cluster(i)$ is the set of points in cluster $i$, and $class(j)$ is the set of points belonging to ground truth class $j$. 

  \item $NMI$ is defined as:
    \begin{equation*}
    NMI = \frac{MI(C, C^\prime)}{\sqrt{H(C)H(C^\prime)}},
    \end{equation*}
    where 
    \begin{equation*}
      \begin{split}
        &MI(C,C^\prime) = \sum_{i=1}^{|C|}\sum_{j=1}^{|C^\prime|}P(i,j)\log(\frac{P(i,j)}{P(i)P^\prime(j)}), \\
        &H(C) = -\sum_{i=1}^{|C|}P(i)\log(P(i)), \\
        &H(C^\prime) = -\sum_{j=1}^{|C^\prime|}P^\prime(j)\log(P^\prime(j)),
      \end{split}
    \end{equation*}
    $C$ is the set of clusters, $C^\prime$ is the set of ground truth classes, $P(i) = |cluster(i)|/n$, $P^\prime(j) = |class(j)|/n$, and $P(i,j) = |cluster(i)\cap class(j)|/n$.
\end{itemize}

\subsubsection{Hyperparameter Setting}
\label{Sec:ParamSetting}
In our experiments, we set embedding dimensionality $m = 128$ for our method and all the baselines. For the hyperparameters of all the baselines, we manully turn them to be optimal. We implement our policy network $\boldsymbol{\pi}$ with 2 hidden layers consisting 10 and 5 perceptrons respectively. The learning rate $\alpha$ of our method is set to $0.002$. Specially, the hyperparameters setting of MLANE are listed in Table \ref{Tbl:ParamSetting}. For fairness, the hyperparameters of baseline methods are tuned for their best performance on the datasets, for details of which one can access https://github.com/7733com/MLANE.

\begin{figure*}[t]
  \centering
  \subfigure[\textbf{Citeseer}]{
    \begin{minipage}[t]{0.25\linewidth}
      \centering
      \includegraphics[width=\linewidth]{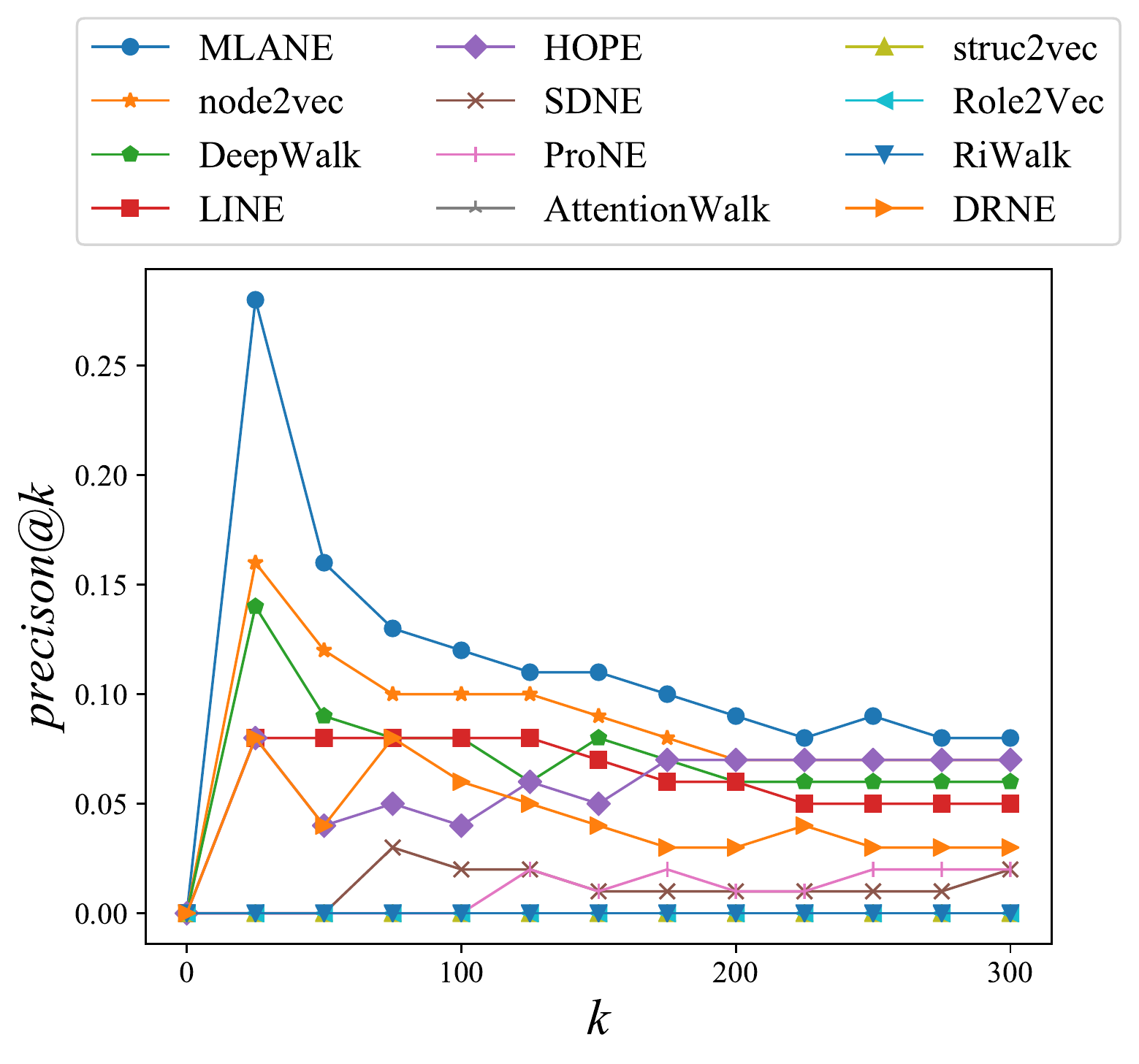}
    \end{minipage}%
  }%
  \subfigure[\textbf{Cora}]{
    \begin{minipage}[t]{0.25\linewidth}
      \centering
      \includegraphics[width=\linewidth]{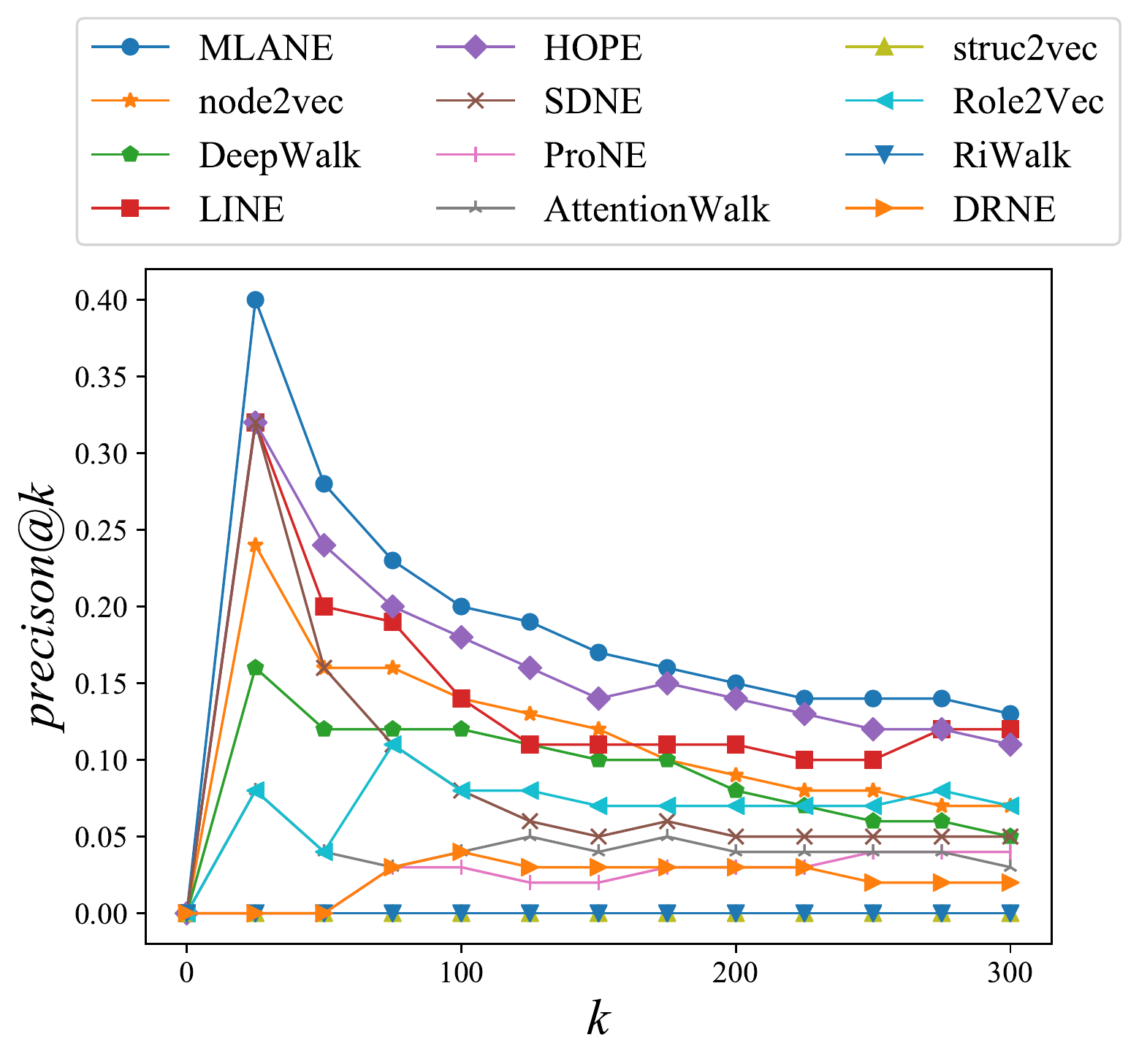}
    \end{minipage}%
  }%
  \quad
  \subfigure[\textbf{HepTh}]{
    \begin{minipage}[t]{0.25\linewidth}
      \centering
      \includegraphics[width=\linewidth]{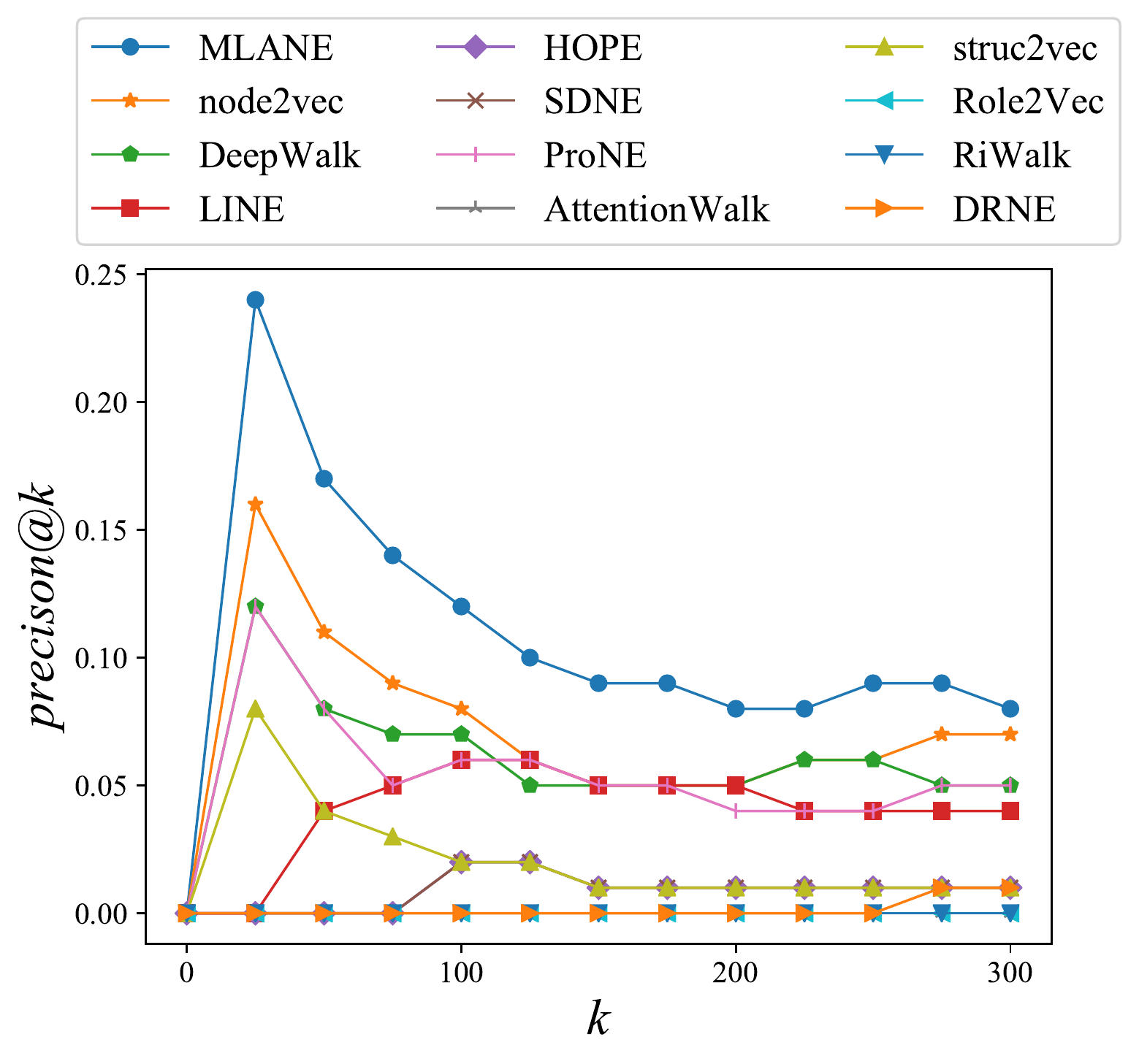}
    \end{minipage}
  }%
  \centering
  \caption{Link prediction results in terms of $precision@k$.}
  \label{Fig:precision@k}
\end{figure*}

\begin{table}[t]
  \caption{Node clustering results in terms of $purity$ and $NMI$.}
  \begin{center}
  \begin{tabular}{@{}l|cc|cc@{}}
  \toprule
             & \multicolumn{2}{c|}{$purity$} & \multicolumn{2}{c}{$NMI$} \\ \midrule
    Method   & Cora               & Citeseer          & Cora            & Citeseer        \\ \midrule  
    MLANE    & \textbf{0.913}     & \textbf{0.600}    & \textbf{0.677}  & \textbf{0.321}  \\
    LINE     & 0.384              & 0.325             & 0.093           & 0.048           \\
    node2vec & 0.760              & 0.495             & 0.425           & 0.278           \\ \bottomrule
  \end{tabular}
  \end{center}

  \label{Tbl:NodeClustering}
\end{table}
\begin{figure*}[t]
  \centering
  \subfigure[Sampling for the purple (IR) node]{
    \label{Fig:SampleVisualizationIR}
    \begin{minipage}[t]{0.25\linewidth}
      \centering
      \includegraphics[width=\linewidth]{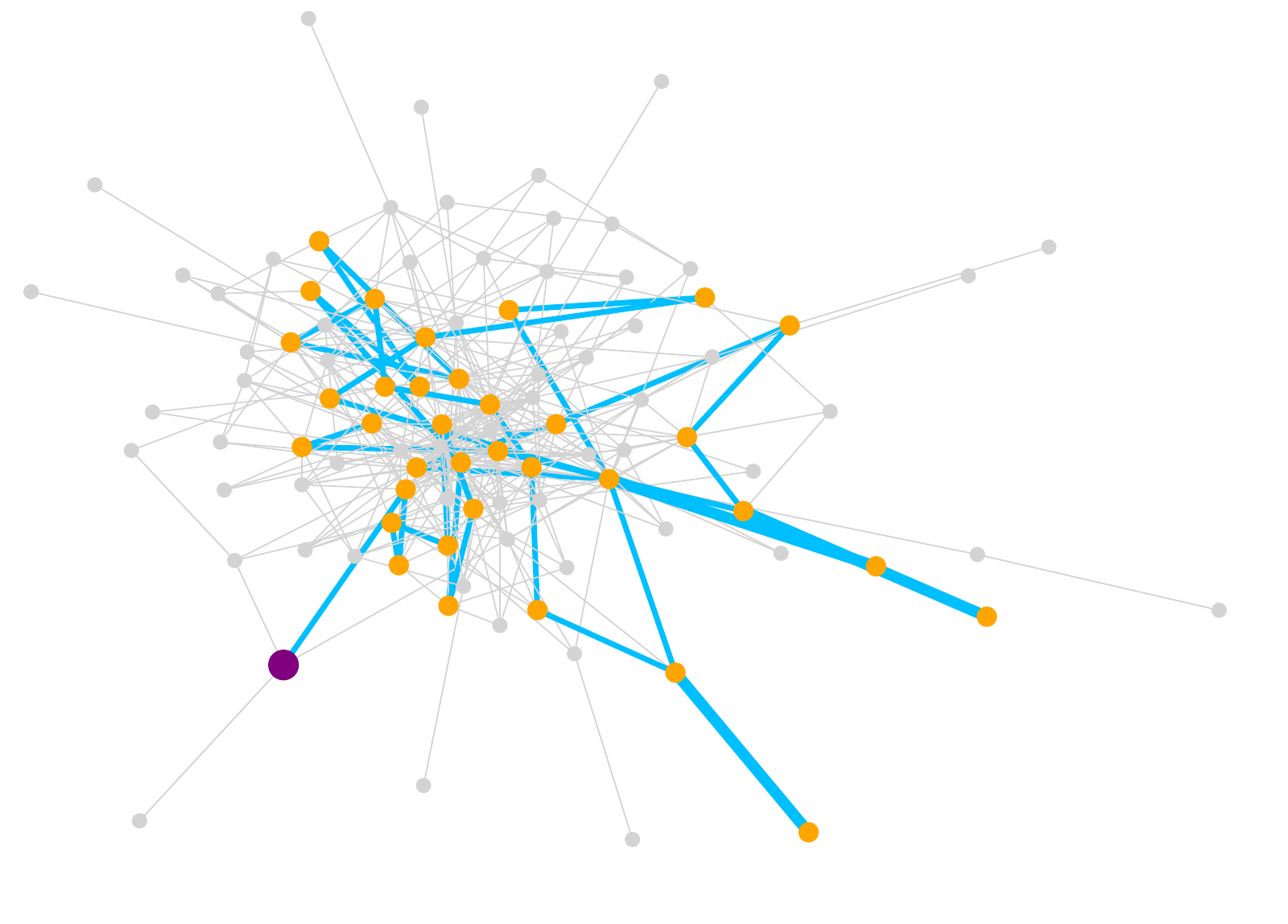}
    \end{minipage}%
  }
  \subfigure[Sampling for the red (EMT) node]{
    \label{Fig:SampleVisualizationEMT}
    \begin{minipage}[t]{0.25\linewidth}
      \centering
      \includegraphics[width=\linewidth]{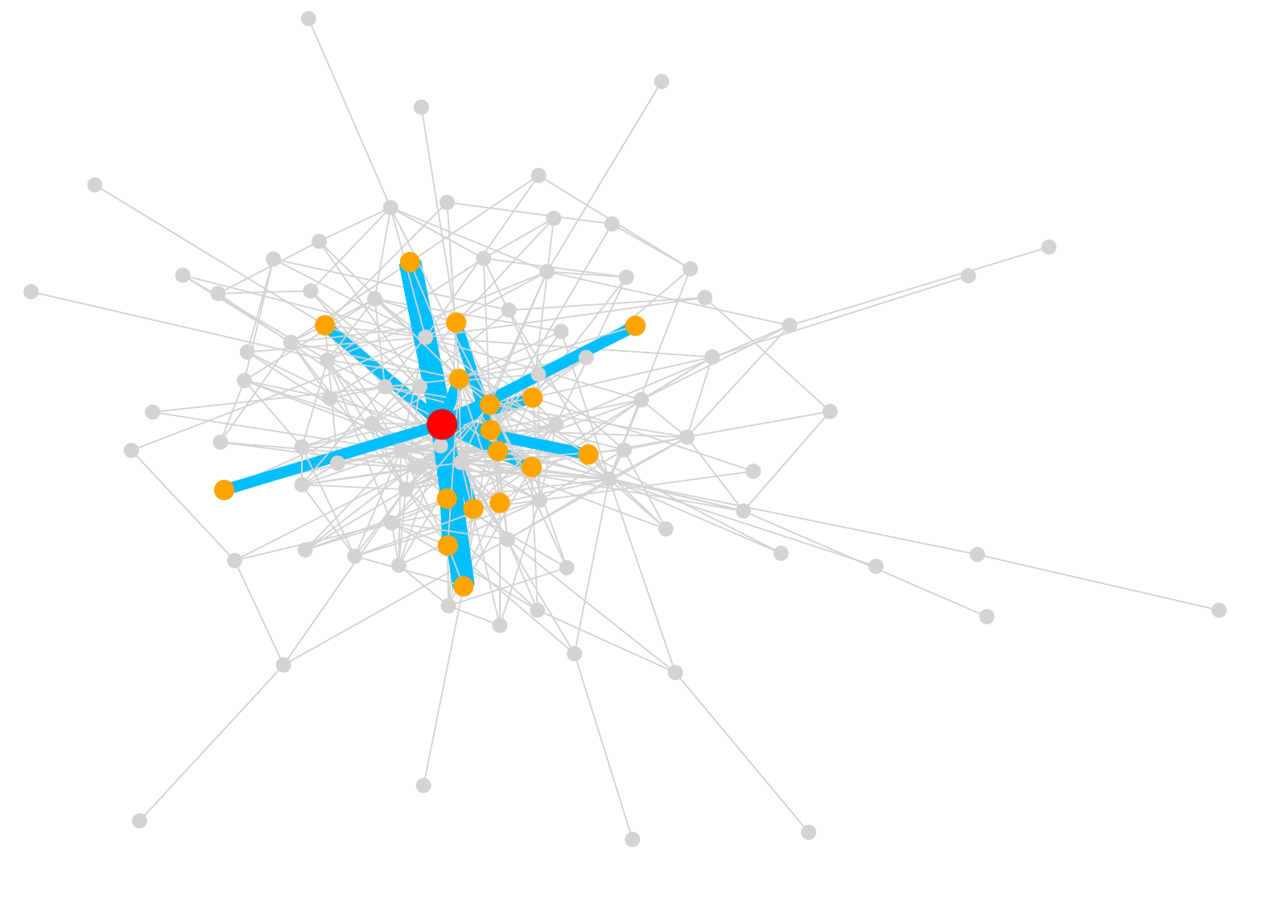}
    \end{minipage}
  }
  \subfigure[Sampling for the green (Coagulation) node]{
    \label{Fig:SampleVisualizationCoa}
    \begin{minipage}[t]{0.25\linewidth}
      \centering
      \includegraphics[width=\linewidth]{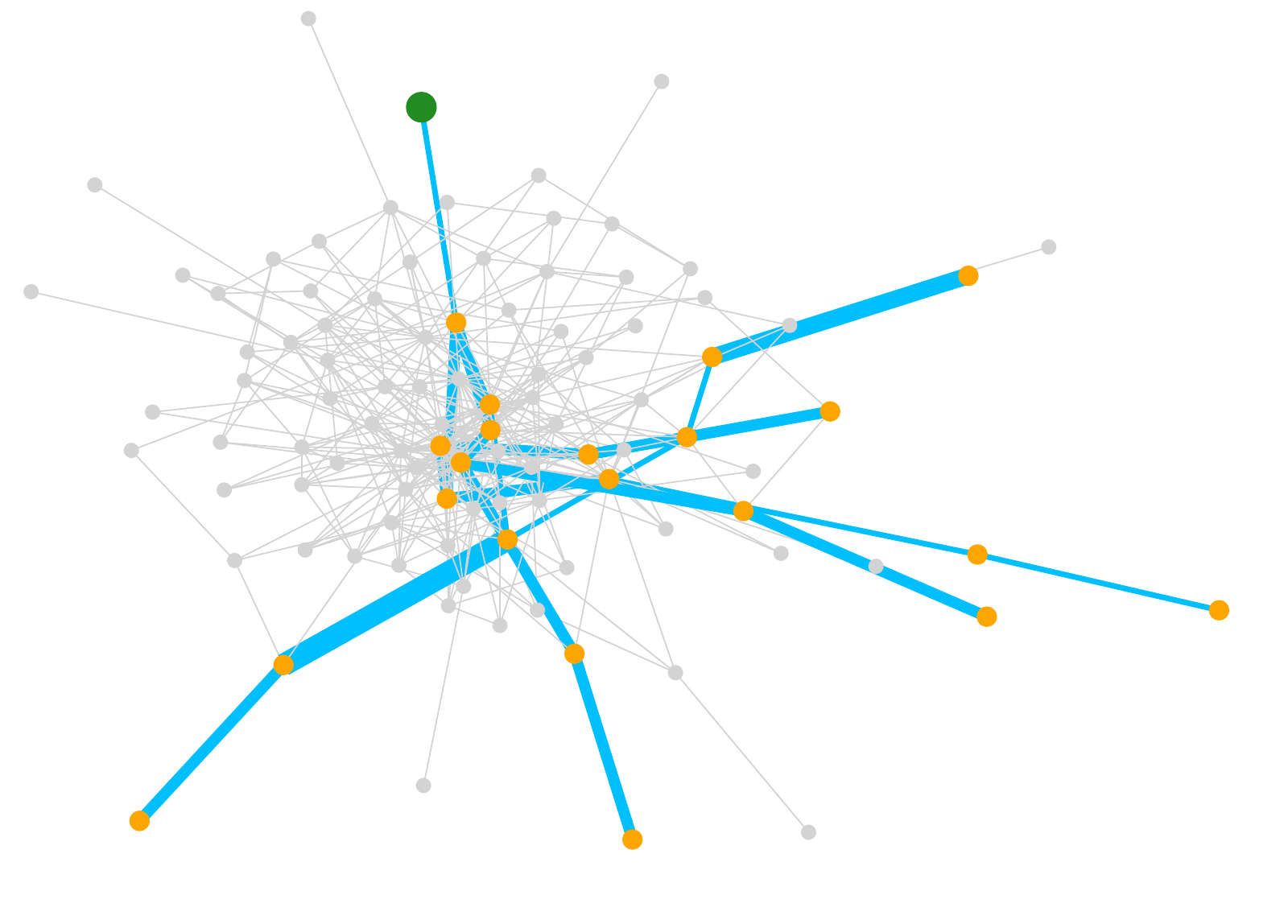}
    \end{minipage}
  }
  \centering
  \caption{Visualization of sampling process for different types of nodes.}
  \label{Fig:SampleVisualization}
\end{figure*}


\subsection{Node Classification}
For node classification, we use the node embeddings generated by MLANE and baselines as feature vectors of nodes, and feed them into a one-vs-rest logistic regression classifier which is trained by using LIBLINEAR package \cite{fan2008liblinear}, and we train our model using $Macro\text{-}F1$ as the reward. On each dataset used for node classification, we randomly sample 80\% of nodes as the training set and use the left nodes as testing set. Note that we don't use HepTh here as it contains no label information. From Table \ref{Tbl:Classification}, we can see that MLANE outperforms the baselines on all datasets. Especially, on Amazon dataset MLANE achieves 35.4\% improvement over the second best method in terms of $Macro\text{-}F1$. In Amazon dataset, there are 22 classes of item nodes which is defined mostly by homophily, and one class of brand nodes which is defined mostly by structural equivalence. Therefore, the results demonstrate that MLANE is able to adaptively capture homophily and structural equivalence with different weights for the embeddings of different nodes.

\subsection{Link Prediction}
We evaluate the performance of MLANE on link prediction task in terms of $precision\text{@}k$.
On each dataset, we randomly remove 10\% links and use the remaining network to learn the node embeddings. Similar to existing works \cite{wang2016structural}, a link between two nodes is predicted if the similarity (evaluated by inner product) of the embeddings of that two nodes ranks in top $k$. For link prediction, we train our model using $precision@k$ as the reward. Due to the space limitation, we only show the results on Citeseer, Cora, and HepTh. 
The $precision\text{@}k$ are shown in Figure \ref{Fig:precision@k}, from which we can see that MLANE also outperforms the baselines on link prediction. Interestingly, in our experiments almost all baselines that only preserve structural equivalence exhibit relatively poor performance on the link prediction task, which is reasonable as homophily is usually more important than structural equivalence for link prediction. At the same time, we see that MLANE shows better performance than baselines that only preserve homophily. This is because some links can not be explained by homophily, which MLANE can adaptively realize during the sampling process with learned sampling strategy.

\subsection{Node Clustering}
We conduct node clustering by applying $K$-means algorithm, of which the results are visualized via the dimensionality reduction algorithm t-SNE \cite{maaten2008visualizing}. For node clustering, we train our model using $NMI$ as the reward. Because of the space limitation, we only show the results on Citeseer and Cora, and for clear visualization, we just show three clusters corresponding to three classes for each dataset. Particularly, Figure \ref{Fig:NodeClusteringCiteseer} shows the clustering results on Citeseer, where the nodes of classes ``IR'', ``AI'', and ``Agents'' are marked with colors blue, green, and red, respectively. Similarly, Figure \ref{Fig:NodeClusteringCora} shows the clustering results on Cora, where nodes of classes ``Case Based'', ``Genetic Algorithm'', ``Reinforcement Learning'' are also marked with colors blue, green, and red, respectively. Figure \ref{Fig:NC_Citeseer} and Figure \ref{Fig:NC_Cora} are the original distributions of nodes of Citeseer and Cora, respectively. Intuitively, we can see that on each dataset the clusters produced by MLANE (Figures \ref{Fig:NC_Citeseer_MLANE} and \ref{Fig:NC_Cora_MLANE}) are purer than those produced by LINE and node2vec (Figures \ref{Fig:NC_Citeseer_LINE},  \ref{Fig:NC_Citeseer_node2vec}, \ref{Fig:NC_Cora_LINE}, and \ref{Fig:NC_Cora_node2vec}). To verify our intuition, we further quantitatively evaluate the clustering in terms of the metrics $purity$ and the $NMI$ (Normalized Mutual Information), which are shown in Table \ref{Tbl:NodeClustering}. From Table \ref{Tbl:NodeClustering} we can see that MLANE produces a remarkably better performance than LINE and node2vec.

\begin{figure*}[t]
  \centering
  \subfigure[\textbf{Original distribution}]{
    \begin{minipage}[t]{0.25\linewidth}
      \centering
      \label{Fig:NC_Citeseer}
      \includegraphics[width=\linewidth]{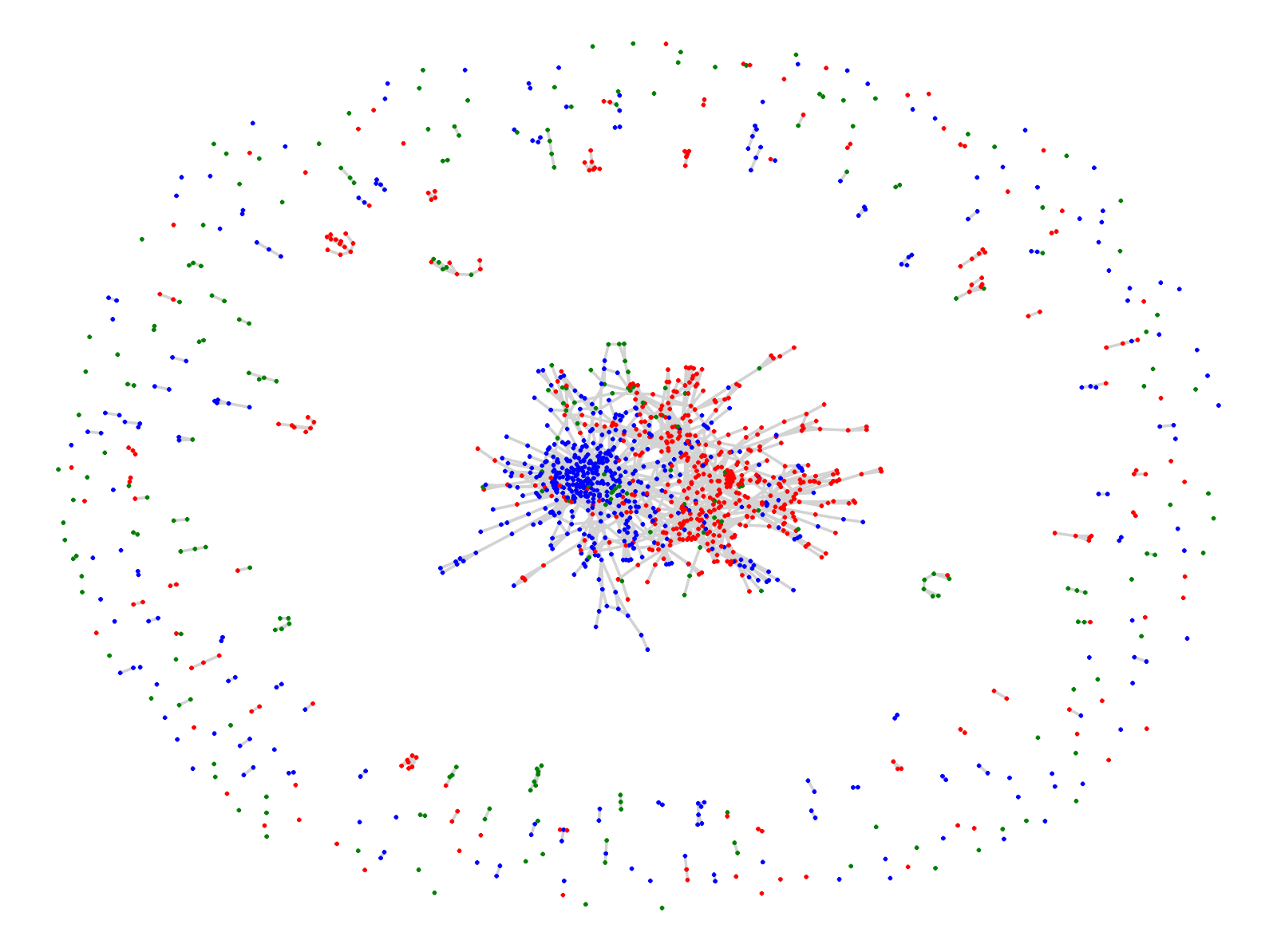}
    \end{minipage}%
  }%
  \subfigure[\textbf{MLANE}]{
    \begin{minipage}[t]{0.25\linewidth}
      \centering
      \label{Fig:NC_Citeseer_MLANE}
      \includegraphics[width=\linewidth]{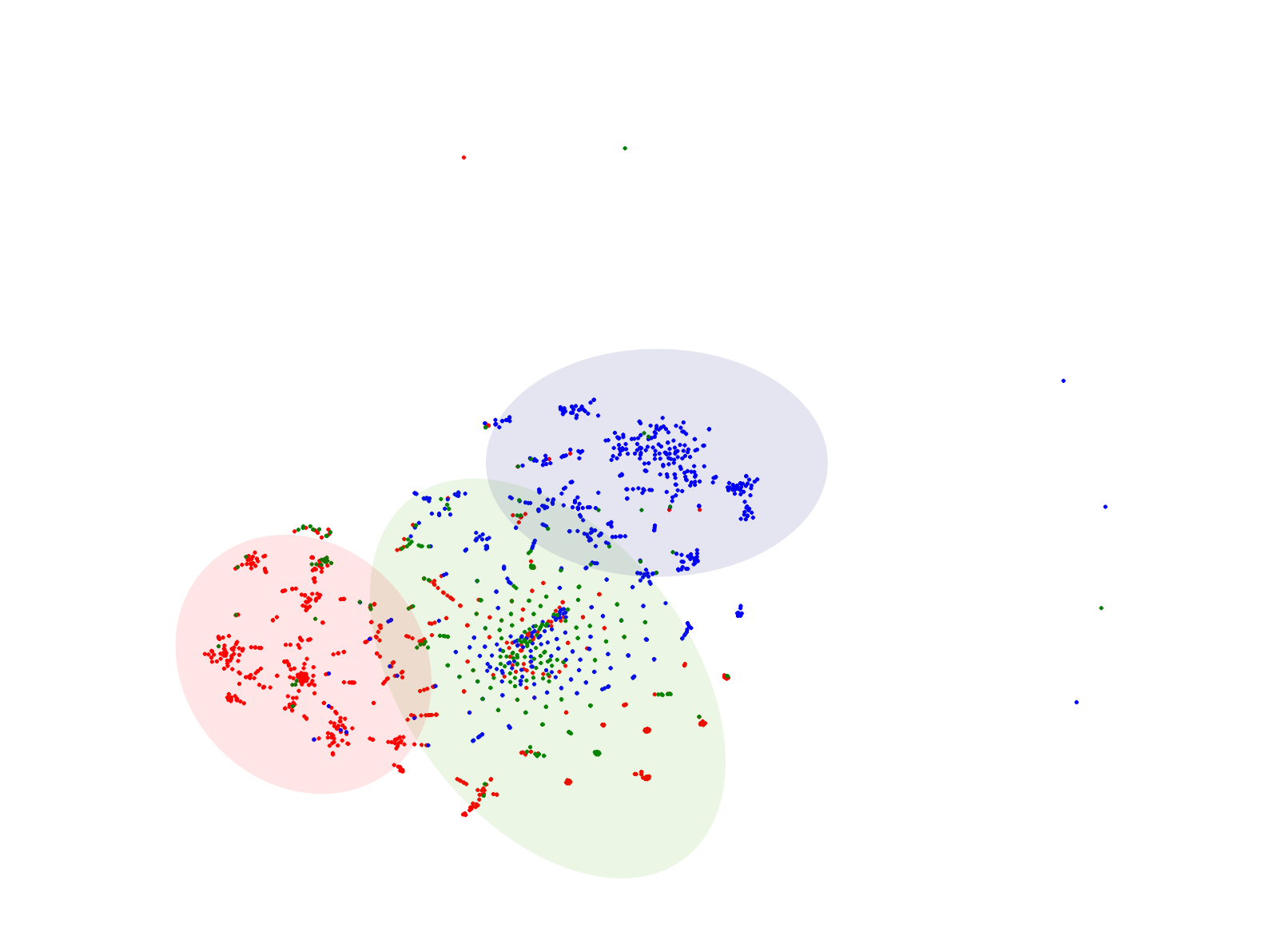}
    \end{minipage}%
  }%
  \subfigure[\textbf{LINE}]{
    \begin{minipage}[t]{0.25\linewidth}
      \centering
      \label{Fig:NC_Citeseer_LINE}
      \includegraphics[width=\linewidth]{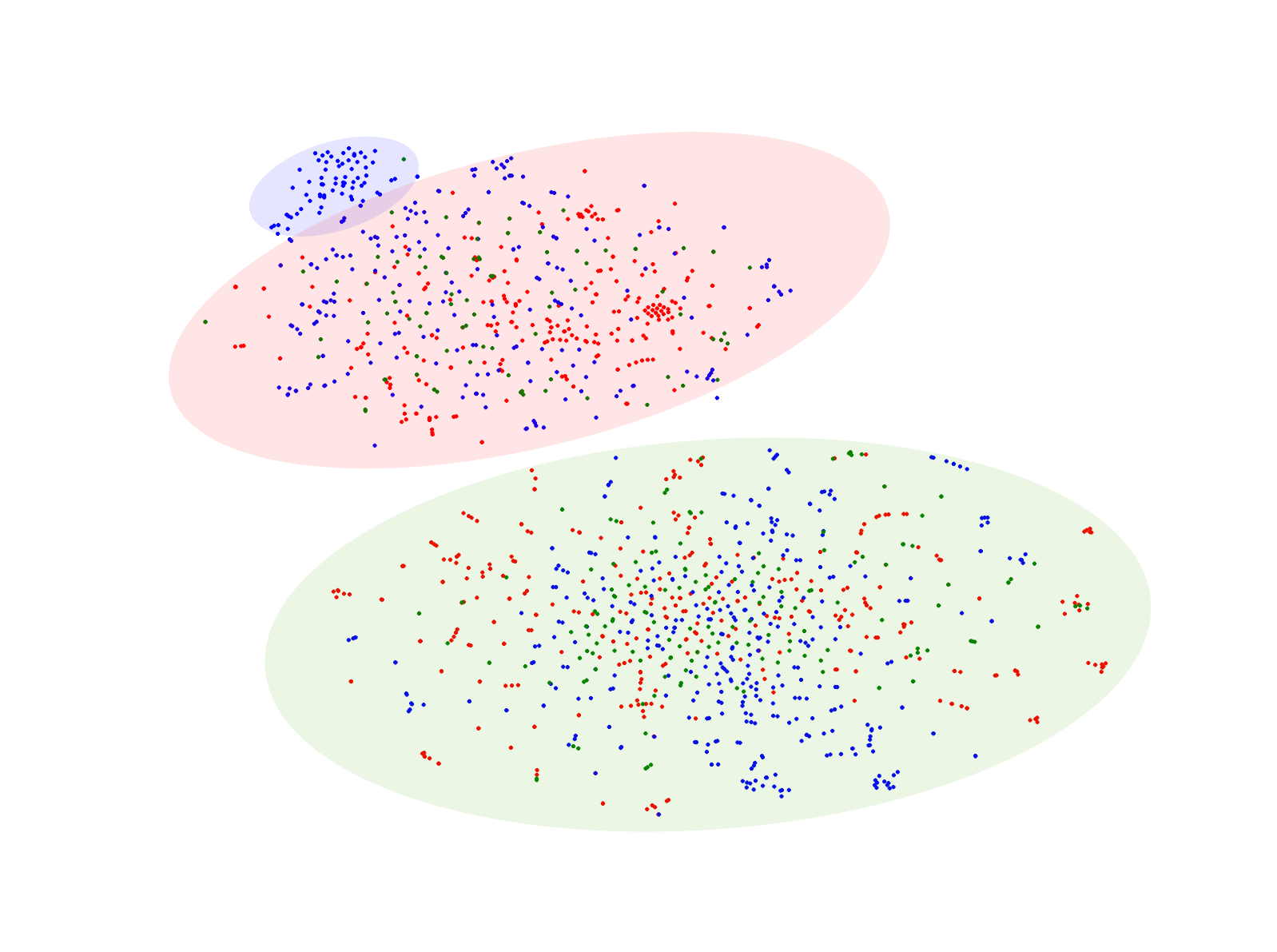}
    \end{minipage}
  }%
  \subfigure[\textbf{node2vec}]{
    \begin{minipage}[t]{0.25\linewidth}
      \centering
      \label{Fig:NC_Citeseer_node2vec}
      \includegraphics[width=\linewidth]{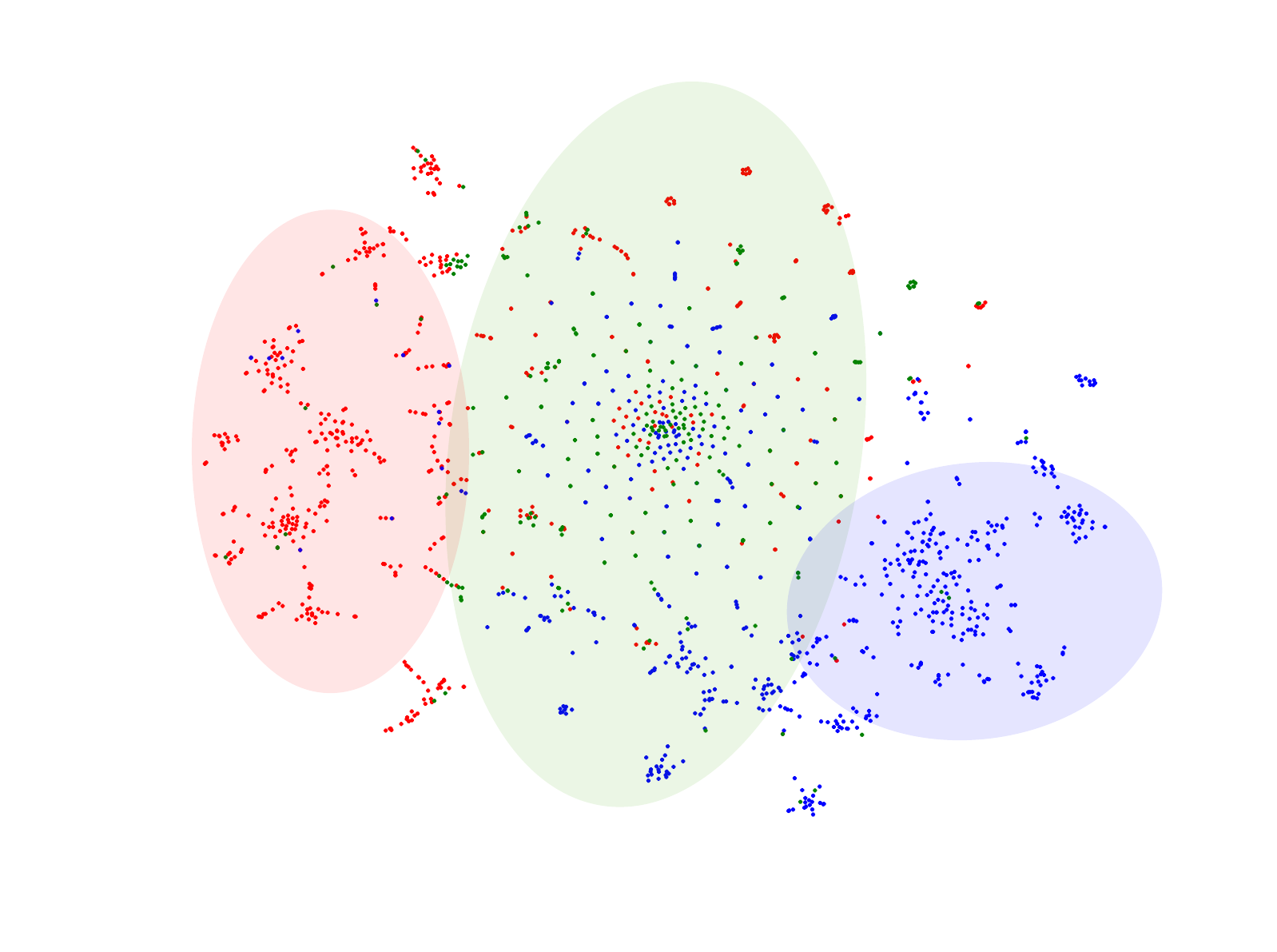}
    \end{minipage}
  }%
  \centering
  \caption{Visualization of node clustering on Citeseer.}
  \label{Fig:NodeClusteringCiteseer}
\end{figure*}
\begin{figure*}[t]
  \centering
  \subfigure[\textbf{Original distribution}]{
    \begin{minipage}[t]{0.25\linewidth}
      \centering
      \label{Fig:NC_Cora}
      \includegraphics[width=\linewidth]{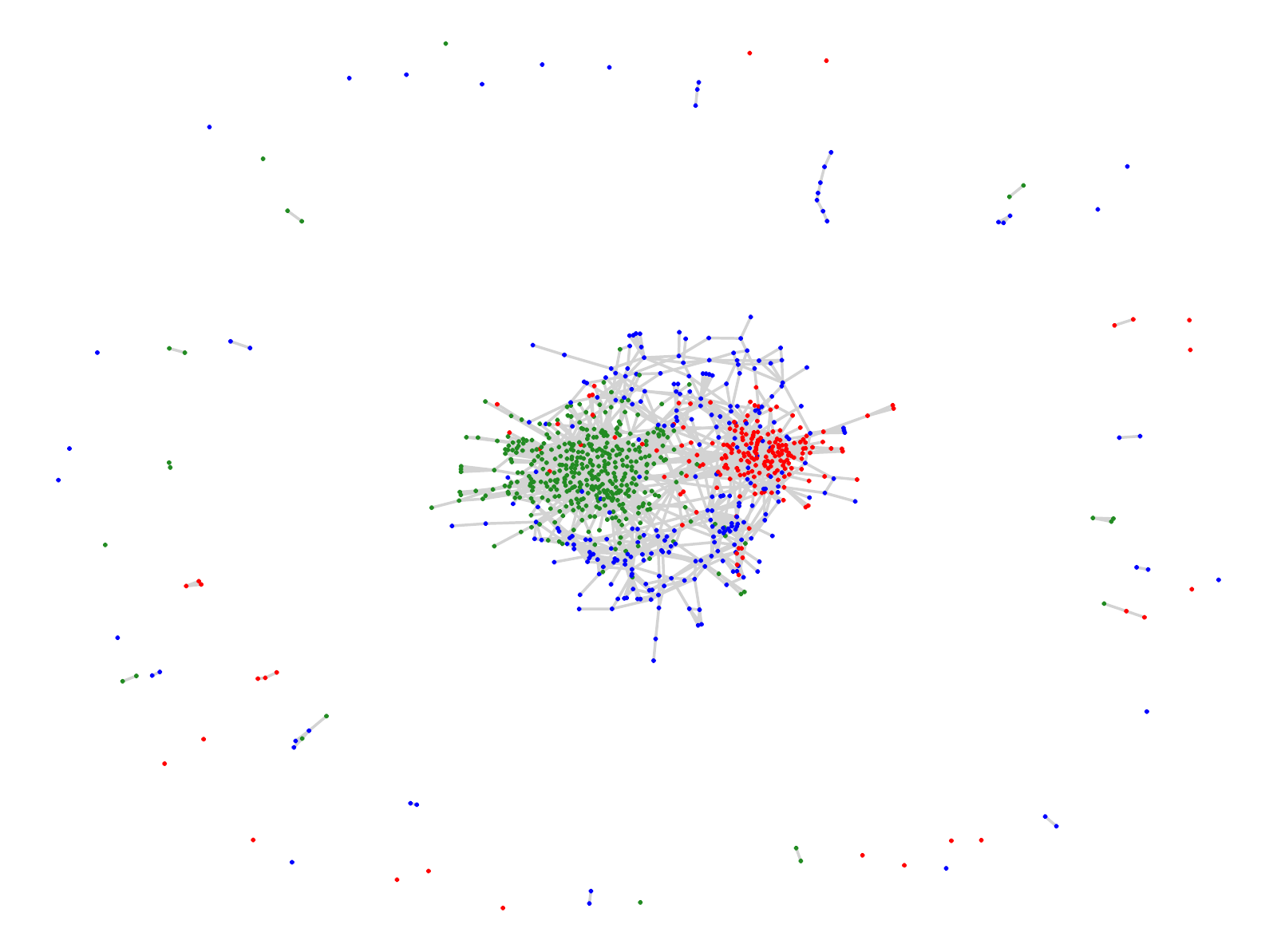}
    \end{minipage}%
  }%
  \subfigure[\textbf{MLANE}]{
    \begin{minipage}[t]{0.25\linewidth}
      \centering
      \label{Fig:NC_Cora_MLANE}
      \includegraphics[width=\linewidth]{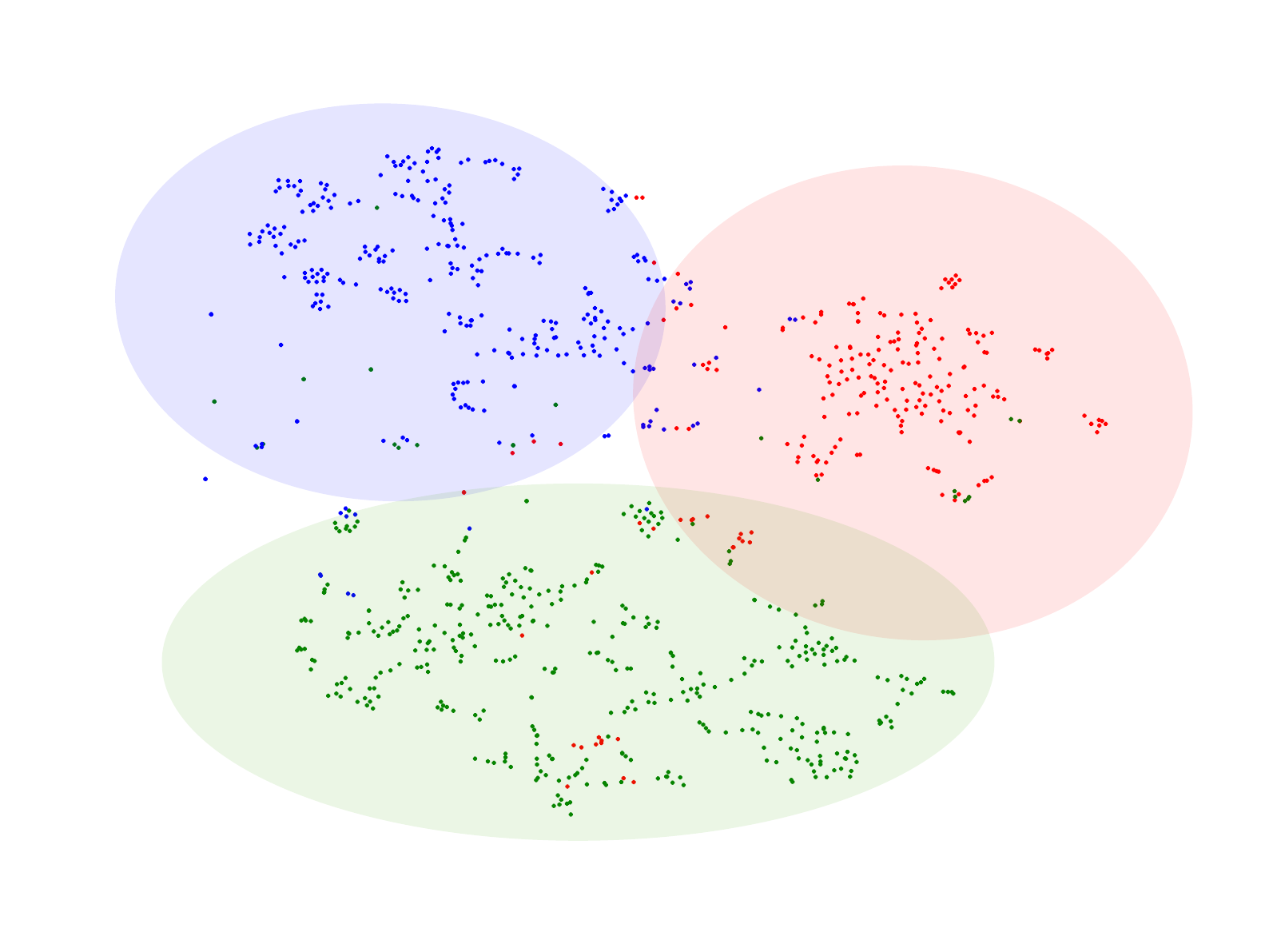}
    \end{minipage}%
  }%
  \subfigure[\textbf{LINE}]{
    \begin{minipage}[t]{0.25\linewidth}
      \centering
      \label{Fig:NC_Cora_LINE}
      \includegraphics[width=\linewidth]{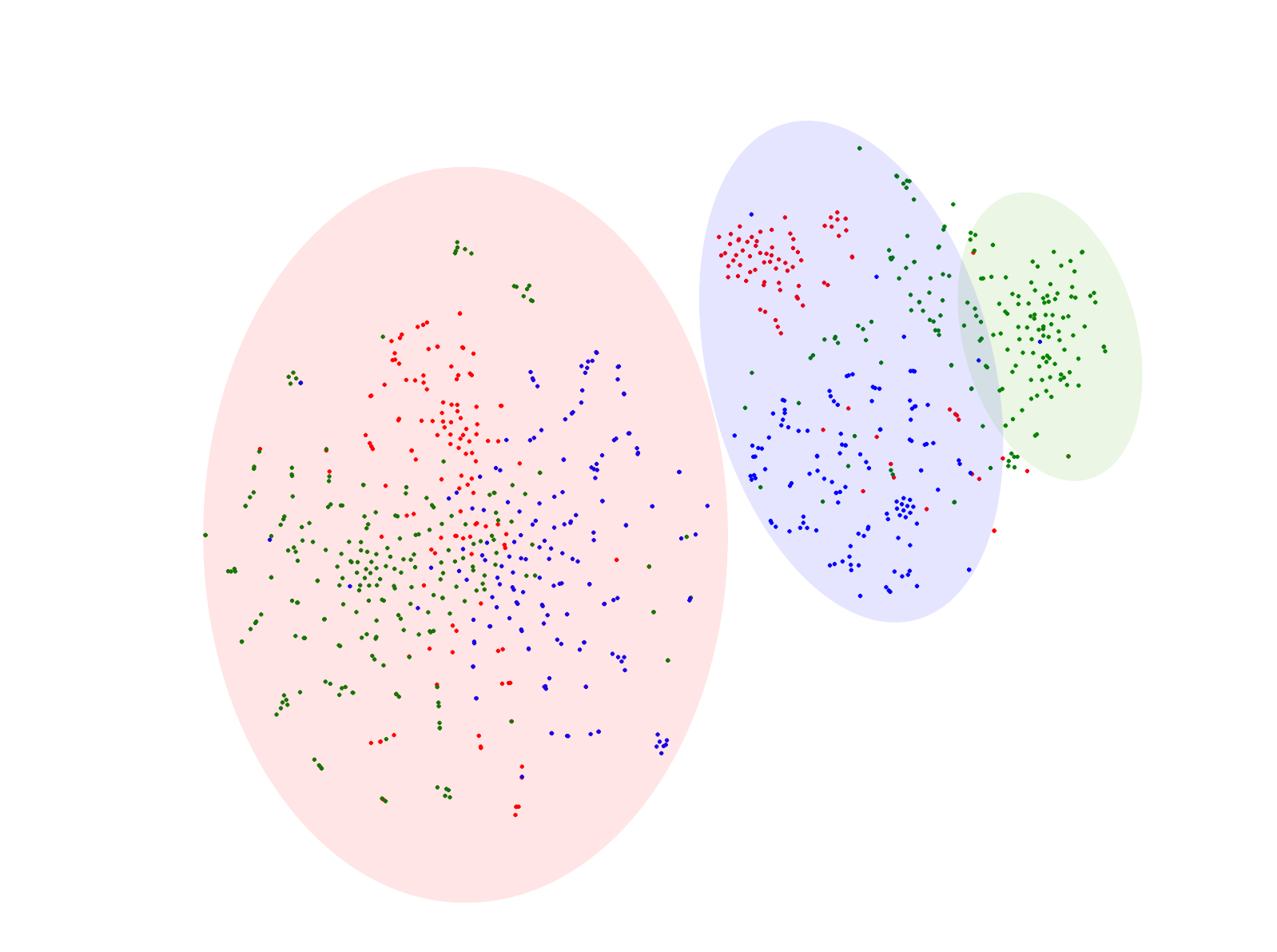}
    \end{minipage}
  }%
  \subfigure[\textbf{node2vec}]{
    \begin{minipage}[t]{0.25\linewidth}
      \centering
      \label{Fig:NC_Cora_node2vec}
      \includegraphics[width=\linewidth]{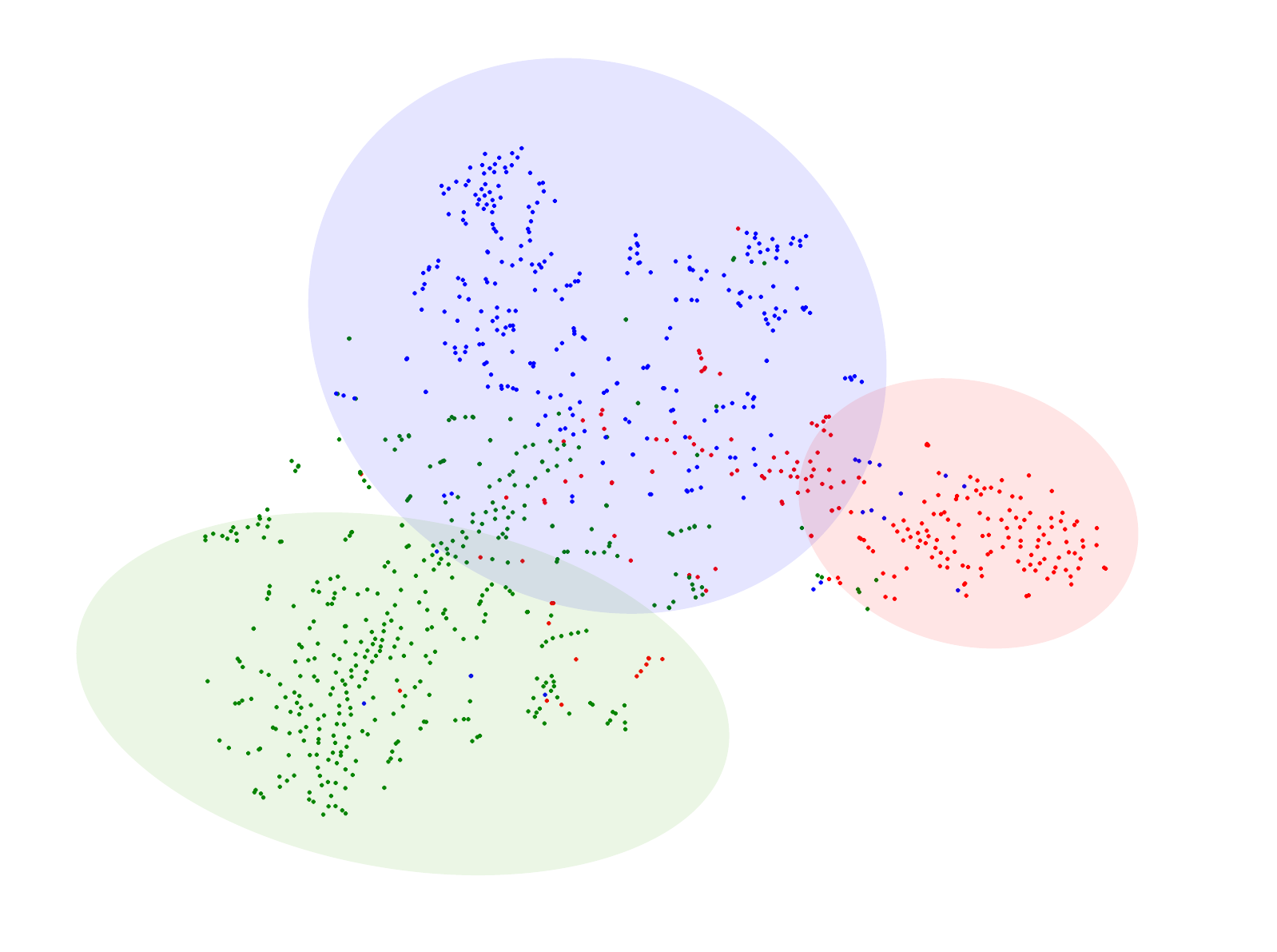}
    \end{minipage}
  }%
  \centering
  \caption{Visualization of node clustering on Cora.}
  \label{Fig:NodeClusteringCora}
\end{figure*}

\subsection{Verification of Adaptiveness}
Now we further verify the adaptiveness of MLANE by a case study on a subgraph extracted from dataset PPI which contains three classes of gene nodes including EMT nodes, IR nodes, and Coagulation nodes. Figure~\ref{Fig:SampleVisualization} visualizes three sampling processes of MLANE for classification of the three nodes, where sampled nodes are marked with orange color and the visited edges are marked with blue color. The thicker an edge, the more frequent it is visited. We can obtain the following observations:

\begin{itemize}
\item Figure~\ref{Fig:SampleVisualizationIR} shows a sampling process of an IR node colored with purple. MLANE biases the sampling strategy for the IR node to DFS which tends to sample the nodes far from the IR node so that homophily can be preserved into the embedding of this node. 

\item Figure~\ref{Fig:SampleVisualizationEMT} shows a sampling process of an EMT node colored with red. MLANE biases its sampling strategy to BFS which tends to sample more neighbors so that the structural equivalence can be preserved into its embedding. 

\item At last, Figure~\ref{Fig:SampleVisualizationCoa} shows a sampling process of a Coagulation node colored with green. Similar to the sampling process of the IR node, MLANE also biases the the sampling strategy of this Coagulation node to DFS so that homophily can be preserved into the embedding of this node. 

\end{itemize}

In summary, this case study gives us a visual demonstration of the adaptiveness of MLANE, where the results are consistent with the intuition described in Section~\ref{Sec:Intro} that the embeddings of EMT nodes should preserve structural equivalence so that they can be correctly classified based on their embeddings while the embeddings of IR nodes and Coagulation nodes should preserve homophily in order to be classfied correctly based on their embeddings.

\subsection{Convergence Speed Analysis}
Now we investigate the convergence speed of MLANE. We run MLANE and baselines on randomly generated graphs with sizes from 1,000 to 10,000, where the node and label distributions are as same as Citeseer. For clear visualization, we only show the results of MLANE, AttentionWalk \cite{abu2018watch}, and GAT \cite{velivckovic2017graph} in Figure \ref{Fig:ConvergenceSpeed}, where the vertical axis represents the number of epochs when the gradient descent converges. 
In Figure \ref{Fig:ConvergenceSpeed}, the numbers of epochs that MLANE and GAT need for convergence both grow approximately linear to the scale of datasets, while that of AttentionWalk grows exponentially. When the scale is greater than 8,000 nodes, the number of epochs of AttentionWalk becomes significantly larger than that of MLANE. Meanwhile, GAT always takes more epochs than MLANE. From Figure \ref{Fig:ConvergenceSpeed}, we can conclude that the convergence speed of MLANE is approximately linear to the scale of datasets, which suggests that MLANE can improve node embeddings without compromising the scalability.

 \begin{figure}[t]
    \centering
    \includegraphics[width=0.75\linewidth]{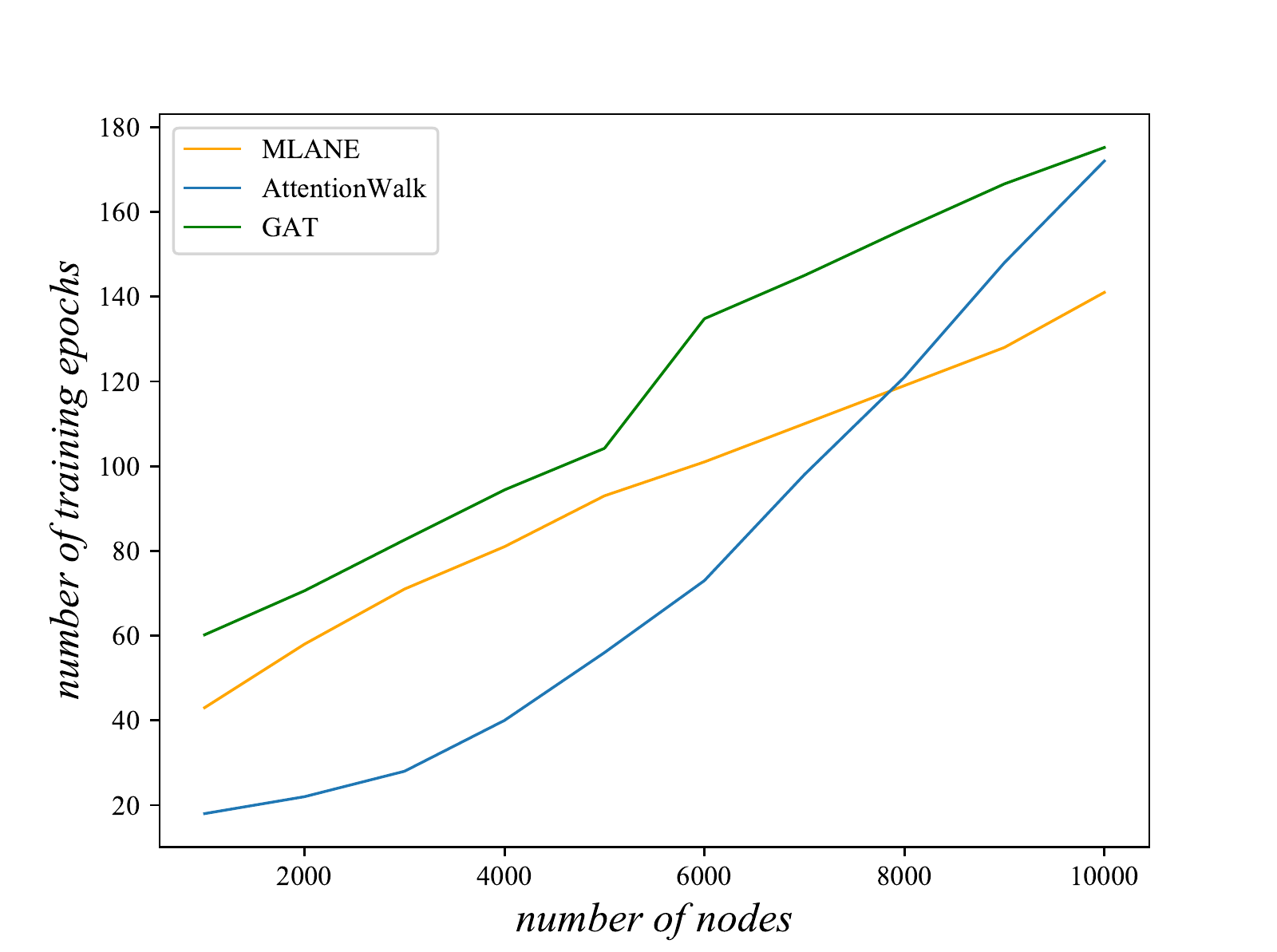}
    \caption{Convergence speed of MLANE, AttentionWalk, and GAT.}
    \label{Fig:ConvergenceSpeed}
  \end{figure}

\section{Related Work}
\label{Sec:RelatedWork}
\subsection{Network Embedding}
In terms of what kind of structural property is preserved, the existing network embedding methods can be roughly categorized into two classes. One class is to preserve \textit{homophily} of nodes \cite{mcpherson2001birds}, while the other class is to preserve \textit{structural equivalence} of nodes \cite{lorrain1971structural}. Homophily regularizes the learned embeddings with local connectivity so that the interconnected nodes have similar representations. For example, DeepWalk \cite{perozzi2014deepwalk}, LINE \cite{tang2015line}, AttentionWalk \cite{abu2018watch}, ProNE \cite{zhang2019prone}, SDNE \cite{wang2016structural}, HOPE \cite{ou2016asymmetric}, GAT \cite{velivckovic2017graph}, and DHPE \cite{Zhu8329541} learn node embeddings by preserving the first-order proximity or high-order proximity between nodes. Preserving homophily will benefit tasks like community detection, as nodes with similar label or features are more likely to be connected, but often fail in tasks like structure role identity \cite{Rossi6880836,ribeiro2017struc2vec}. In structural role identity, the nodes with similar local topology, even though without connection, play the same function and will be identified with the same role, where structural equivalence between nodes is the property desired to be preserved in the embedding learning. For example, by capturing structural equivalence between nodes, struct2vec \cite{ribeiro2017struc2vec}, RiWalk \cite{xuewma2019riwalk}, and DRNE \cite{tu2018deep} can generate similar embeddings for nodes of similar roles while dissimilar embeddings for nodes of different roles. As in real world nodes often exhibit a mixture of homophily and structural equivalence, Grover et al. propose a random walk based model called node2vec, which takes both properties into consideration via two hyperparameters ($p$ and $q$) controlling the probability of which one of the two search strategies, BFS (preferring to capture structure equivalence) and DFS (preferring to capture homophily), is chosen at each walk step \cite{grover2016node2vec}. However, all the existing random walk based methods separate the sampling process from embedding learning, which makes them nonadaptive and might degrade the embeddings due to the potentially inconsistent objectives of sampling and embedding.

\subsection{Meta-learning}
Meta-learning aims at learning to learn, which consists of a learner (model) responsible learning a task and a meta-learner (optimizer) responsible for learning how to train the learner \cite{vilalta2002perspective}. The existing meta-learning methods often focus on learning the optimizer that can be used for model training \cite{mishra2018simple}, or learning parameter initialization for fast adaptation \cite{finn2017model}. Recently, Peng et al. propose a meta-learner to learn undersampling strategy for class-imbalance learning \cite{peng2019trainable}. Different from the existing meta-learning methods, in this paper we propose a reinforcement learning based meta-learner to learn the search strategy for random walk based network embedding learning, by which the random walk based sampling process can be incorporated with embedding learning into an optimization problem that can be solved in an end-to-end fashion.


\section{Conclusion}
\label{Sec:Conclusion}
In this paper, we propose a novel method called Meta-Learning based Adaptive Network Embedding (MLANE). MLANE incorporates the random walk based sampling process with embedding learning into one optimization problem that can be solved via an end-to-end meta-learning framework based on reinforcement learning. By making the sampling process learnable, MLANE can adaptively preserve homophily and structural equivalence for different nodes in different tasks. Extensive experiments conducted on real datasets verify that due to the adaptiveness, MLANE can significantly improve the performance of node embeddings for down-stream network analysis tasks.

\section*{Acknowledgment}
This work is supported by National Natural Science Foundation of China under grant 61972270 and Hightech Program of Sichuan Province under grant 2019YFG0213.

\bibliographystyle{IEEEtran}
\bibliography{IEEEabrv,ANE_BigData}

\vspace{12pt}

\end{document}